\documentclass[sigconf, 10pt, screen]{acmart}
\usepackage{xcolor}
\usepackage{multirow}
\usepackage{url}
\usepackage{array}

\usepackage{enumitem}
\usepackage{float}
\usepackage{subfig}
\usepackage{fontawesome}
\usepackage[linesnumbered,ruled,vlined]{algorithm2e}
\usepackage{algpseudocode}
\usepackage{hhline}
\usepackage{pifont}
\usepackage{bbding}
\usepackage{makecell}
\usepackage{graphicx}
\usepackage{amsmath}
\usepackage{amsfonts}
\usepackage{hyperref}
\usepackage{cleveref}
\usepackage{soul}
\usepackage{colortbl}
\usepackage[switch]{lineno}
\usepackage{tcolorbox}
\tcbuselibrary{listings,breakable}

\hypersetup{
  colorlinks   = true,    
  urlcolor     = purple,    
  linkcolor    = red,    
  citecolor    = green   
}
\usepackage{color}
\definecolor{green}{rgb}{0, 0.5, 0}
\definecolor{orange}{rgb}{0.8, 0.6, 0.2}
\definecolor{orange2}{rgb}{1.0, 0.6, 0.2}
\definecolor{red}{rgb}{1.0, 0.0, 0.0}
\definecolor{teal}{rgb}{0.0, 0.4, 0.4}
\definecolor{purple}{rgb}{0.65,0,0.65}
\definecolor{saffron}{rgb}{0.95,0.75,0.2}
\definecolor{turquoise}{rgb}{0.0,0.5,0.5}
\definecolor{black}{rgb}{0.0, 0.0, 0.0}
\definecolor{gray}{rgb}{0.5, 0.5, 0.5}

\AtBeginDocument{%
  }

\begin{document}

\setcopyright{cc}
\copyrightyear{2018}
\acmYear{2018}
\acmDOI{XXXXXXX.XXXXXXX}
\acmConference[Conference acronym 'XX]{Make sure to enter the correct
  conference title from your rights confirmation email}{June 03--05,
  2018}{Woodstock, NY}
\acmISBN{978-1-4503-XXXX-X/2018/06}

\title{MotionQ: Operator-Conditioned Motion Quotients for Cross-Observation WiFi Gesture Recognition}

\author{
Xiang Zhang$^{1}$,
Huan Yan$^{2}$,
Geying Yang$^{1}$,
Jianchun Liu$^{3}$,\\
Tao Liu$^{4}$,
Zhi Liu$^{5}$,
Meng Li$^{6}$,
}

\affiliation{
$^{1}$ Tianjin University \country{}
$^{2}$ Guizhou Normal University \country{}
$^{3}$ University of Science and Technology of China \country{}
$^{4}$ Guangzhou University \country{}
$^{5}$ The University of Electro-Communications \country{}
$^{6}$ Hefei University of Technology \country{}
}

\renewcommand{\shortauthors}{XX et al.}

\begin{abstract}
WiFi gesture recognition is accurate in fixed deployments but often degrades when user orientation, available links, or transceiver placement changes. Unlike ordinary domain shifts, these changes alter the wireless observation operator, so the same motion is expected to produce different measurements. Existing methods nevertheless pursue domain-invariant features and largely overlook changing layouts and observation configurations. Yet changing the observation operator also changes which task-relevant motion cues are physically observable, rather than merely altering the appearance of a fixed set of cues. Under a local linearization of the WiFi forward process, we derive a common task-observability condition under which a strict common linear representation is recoverable from every geometry-induced operator while preserving the gesture task. When the condition fails, enforcing stronger alignment across additional heterogeneous source operators may discard task-relevant cues still observable under individual operators. We therefore present MotionQ, which generates an operator-conditioned two-support motion measure for each candidate geometry. A motion quotient removes only the arbitrary ordering of its unlabeled supports and is represented by permutation-invariant central moments. Rather than matching quotients across operators, single-link-retention interventions encourage each view to retain information sufficient for gesture recognition. Extensive evaluations show that MotionQ is robust to extrapolative observation operators. 

\end{abstract}

\begin{CCSXML}
<ccs2012>
   <concept>
       <concept_id>10003120.10003138</concept_id>
       <concept_desc>Human-centered computing~Ubiquitous and mobile computing</concept_desc>
       <concept_significance>500</concept_significance>
       </concept>
   <concept>
       <concept_id>10003120.10003121.10003128</concept_id>
       <concept_desc>Human-centered computing~Interaction techniques</concept_desc>
       <concept_significance>500</concept_significance>
       </concept>
   <concept>
       <concept_id>10003033.10003058.10003065</concept_id>
       <concept_desc>Networks~Wireless access points, base stations and infrastructure</concept_desc>
       <concept_significance>300</concept_significance>
       </concept>
 </ccs2012>
\end{CCSXML}

\ccsdesc[500]{Human-centered computing~Ubiquitous and mobile computing}
\ccsdesc[500]{Human-centered computing~Interaction techniques}
\ccsdesc[300]{Networks~Wireless access points, base stations and infrastructure}

\keywords{Gesture Recognition, WiFi Sensing, Cross-Domain}


\maketitle

\section{Introduction}
\label{sec:intro}

Gesture recognition plays an important role in human computer interaction~\cite{kim2026simplified,wang2025vr}, and WiFi-based gesture recognition~\cite{li2025cross,yan2025wi} has attracted growing interest due to the ubiquity of wireless infrastructure, the emergence of next-generation integrated sensing and communication (ISAC)~\cite{9363693,hu2025poison}, and its non-intrusive sensing capability~\cite{li2024uwb,fan2026sense}. However, WiFi sensing often suffers from severe generalization challenges. Unlike vision systems that directly observe body motion, WiFi senses gestures only indirectly through motion-induced perturbations to wireless propagation paths, resulting in relatively coarse and strongly geometry-dependent observations~\cite{li2025wilife,zhang2026beyond}. Consequently, its sensing patterns are highly sensitive to changes in the environment, user motion, and the geometric relationship among the user, transmitter, and receiver. A substantial body of prior work has therefore focused on improving the generalizability of WiFi sensing across such variations~\cite{miao2025wi,li2025cross1,chen2024wignn,gao2022towards,zhang2026wi}.

Existing efforts to improve cross-domain WiFi gesture recognition can be broadly divided into domain adaptation and domain generalization~\cite{chen2023cross}. Domain adaptation transfers a source model to a new deployment using samples collected from the target domain. Few-shot approaches reduce this burden to only a small number of target samples~\cite{yin2022fewsense,sheng2024metaformer,feng2022wi,liu2025efficient}, but the overall data collection cost remains substantial in complex WiFi environments. Domain generalization instead aims to recognize gestures in unseen domains without access to target-domain data. One line of work explicitly constructs physically motivated domain-independent features~\cite{chen2026wipihgr}, such as BVP~\cite{zhang2021widar3} and MNP~\cite{gao2021towards}. However, such handcrafted abstractions inevitably retain only the motion characteristics anticipated by their design and may therefore discard useful information. A second line lets neural networks discover invariant representations through mechanisms such as spatial-temporal attention, style randomization, cross-view consistency, or adversarial domain alignment~\cite{gu2022wigrunt,zhang2026beyond,liu2023wisr,liu2024unifi,zhang2026wi,wang2022airfi,wei2025source}. These approaches share a common goal: suppress domain-dependent variations while preserving representations that remain stable across domains, resembling the treatment of domain or style variations in computer vision. This abstraction can be effective for style variations such as hardware distortions or environmental backgrounds, but becomes questionable when user orientation, available links, or transceiver layouts change, because these factors alter how motion is physically observed rather than merely how the observation appears. Cross-layout sensing is particularly important in practical deployments but has received comparatively limited attention. A recent effort, PerceptAlign~\cite{jia2026breaking}, studies cross-layout WiFi-based 3D pose estimation, but mainly treats transceiver geometry as an explicit conditioning feature. We refer to this broader deployment problem as cross-observation generalization: the ability to recognize gestures when the wireless observation operator changes, i.e., the geometric relationship through which WiFi transceivers observe gestures. This raises a question: 
\begin{tcolorbox}[colback=gray!25!white, size=title, breakable, boxsep=1mm, colframe=white, before={\vskip1mm}, after={\vskip0mm}]
\textbf{Q1:} Can the prevailing pursuit of domain-invariant features remain task-sufficient when the observation operator itself changes?
\end{tcolorbox}

\begin{figure}[ht]
  \centering
  \includegraphics[width=0.95\linewidth]{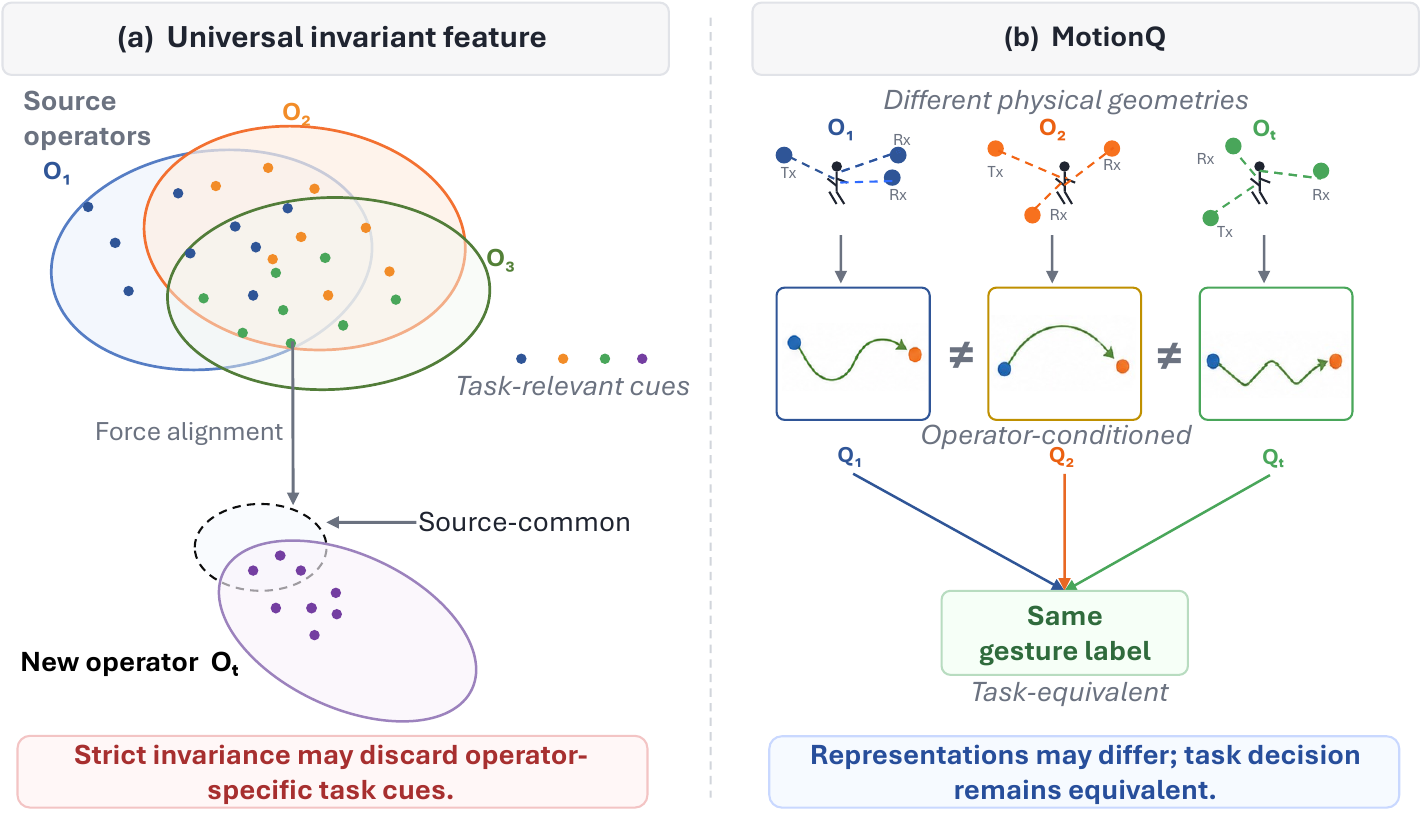}
  \vspace{-0.1in}
  \caption{From strict universal invariance to operator-conditioned task
equivalence.}
  \label{fig:infig}
  \vspace{-0.1in}
\end{figure}


Our answer is: only when task-relevant motion information is jointly observable across the relevant WiFi operators. Different geometries expose different components of the same
motion, so a cue useful under one operator may be absent under another. Prior studies in representation learning have shown that invariance can discard predictive information under non-invertible mappings, distribution shifts, or when task-relevant cues are not confined to shared representations
~\cite{johansson2019support,zhao2019learning,liang2023factorized}. We derive a WiFi-specific common task-observability condition. When this condition fails, alignment cannot recover the missing information, and additional heterogeneous source operators may further shrink the common task-relevant subspace. This does not imply that invariant learning fails in general; rather, it identifies a WiFi sensing regime where invariance is constrained by physical observability. Details please refer to Sec.~\ref{sec:task_observability}.
This raises a natural question: 
\begin{tcolorbox}[colback=gray!25!white, size=title, breakable, boxsep=1mm, colframe=white, before={\vskip1mm}, after={\vskip0mm}]
\textbf{Q2:} If a universal invariant representation is not always appropriate, what should WiFi gesture recognition learn for cross-observation generalization?
\end{tcolorbox}


Our answer is to allow representations to depend on the observation operator while constraining them only through the recognition task, as illustrated in Fig.~\ref{fig:infig}. Guided by this principle, MotionQ follows three stages. First, it encodes the CSI observations of each link, infers several latent user-geometry hypotheses, and uses the known Tx/Rx coordinates to analytically determine how each geometry projects motion onto each link. The resulting geometry-conditioned evidence is fused and decoded into a minimal time-varying two-support motion measure for each geometry hypothesis. Second, the same pipeline processes both the complete observation and every single-link observation. Rather than aligning their representations, MotionQ requires each observation to predict the same gesture label, with a smooth worst-suboperator objective. Third, because the two supports are unlabeled, their arbitrary indices should not become an implicit semantic channel for classification. MotionQ therefore classifies on a \emph{motion quotient} that identifies support-swapped  parameterizations and represents it using permutation-invariant central moments. Each geometry hypothesis then produces a gesture prediction, and the resulting class probabilities are averaged uniformly across hypotheses.

Extensive evaluations demonstrate MotionQ's effectiveness under challenging unseen layouts, extrapolative orientations, and unseen environments. Across six extrapolation tasks, MotionQ averages $89.2\%$, outperforming WiGRUNT~\cite{gu2022wigrunt}, UniFi~\cite{liu2024unifi}, and GesFi~\cite{zhang2026beyond} by $20.1$, $11.0$, and $16.9$ , respectively. In a more realistic setting, where the number of currently available devices may vary, MotionQ ranks first in all three tested configurations and exceeds the strongest WiFi-specific baseline by $12.0$--$18.4$ percentage points. Interestingly, without trajectory supervision, its learned motion supports exhibit consistent gesture-dependent patterns across samples. These emergent structures further suggest that MotionQ captures meaningful motion organization rather than arbitrary discriminative features; we analyze them in Sec.~\ref{sec:support_interpretability}. Our contributions are as follows:
\begin{itemize}[leftmargin=*]
	\item We formulate changes in orientation, links, and transceiver layouts as changes in the wireless observation operator, and establish a common task-observability condition that characterizes when task-sufficient invariant representations are feasible.
	\item  We develop MotionQ, which conditions a motion measure on analytic bistatic geometry, removes arbitrary support identities through a motion quotient, and encourages cross-observation task sufficiency through single-link-retention interventions without representation matching.
	\item MotionQ substantially outperforms state-of-the-art WiFi sensing methods under unseen layouts, extrapolative orientations, and environments, while remaining highly robust when the number and combination of available sensing devices change at deployment.
\end{itemize}

\section{Preliminary}

\subsection{WiFi Gesture Sensing as an Observation Process}
\label{sec:wifi_observation}

Channel State Information (CSI) characterizes the wireless propagation channel between a transmitter and a receiver~\cite{he2026beamforming,han2026rayloc,hu2023muse}, capturing the attenuation and phase changes caused by multipath propagation such as reflection, diffraction, and scattering~\cite{zhang2025wiopen,yan2026diffloc+,tan2025wimap}. For a transmitted signal \(X(f,t)\), the received signal can be written as~\cite{zhang2023wital,chang2026wirainbow,zhang2023wifi}:
\begin{equation}
 Y(f,t)=H(f,t)X(f,t)+N(f,t), 
\end{equation}
where \(Y(f,t)\) and \(X(f,t)\) denote the received and transmitted signals at frequency \(f\) and time \(t\), respectively, \(H(f,t)\) is the CSI describing the wireless channel response, and \(N(f,t)\) denotes measurement noise~\cite{zhao2025baton,zhang2025camlopa}.

Human motion perturbs a subset of propagation paths and thereby introduces time-varying components into CSI. For the \(n\)-th Tx--Rx link, we conceptually decompose its CSI as
\begin{equation}
H_n(f,t) = H_{n,s}(f,t) + H_{n,d}(f,t) + \epsilon_n(f,t), 
\end{equation}
where \(H_{n,s}\) represents the relatively static propagation component, \(H_{n,d}\) captures the dynamic channel variations induced by the executed gesture, and \(\epsilon_n\) summarizes noise and unmodeled propagation effects. Gesture sensing therefore primarily relies on extracting the motion information embedded in \(H_{n,d}\), rather than directly observing the physical motion itself~\cite{li2025muceiver,he2025beam,meng2025metatrack}.

Consider a moving scattering component at position \(r(t)\) with velocity \(v(t)=\dot r(t)\). Let \(T\) and \(R_n\) denote the positions of the transmitter and the receiver of link \(n\), respectively. Its bistatic propagation path length is:
\begin{equation}
d_n(t) = \|r(t)-T\| + \|r(t)-R_n\|.
\end{equation}
Differentiating the path length with respect to time gives:
\begin{equation}
\dot d_n(t) = \left( \frac{r(t)-T}{\|r(t)-T\|} + \frac{r(t)-R_n}{\|r(t)-R_n\|} \right)^{\!T} v(t).
\end{equation}
Defining the bistatic observation vector:
\begin{equation}
q_n(r) = -\frac{1}{\lambda} \left( \frac{r-T}{\|r-T\|} + \frac{r-R_n}{\|r-R_n\|} \right), 
\end{equation}
where \(\lambda\) is the carrier wavelength, the corresponding Doppler shift satisfies
\begin{equation}
f_{D,n}=q_n(r)^Tv.
\end{equation}

This relation reveals a fundamental property of WiFi gesture sensing: a WiFi link does not observe complete motion; it observes a geometry-dependent projection of that motion. Changing the user orientation, available links, or transceiver placement changes the corresponding bistatic observation vectors and thus changes how the same physical gesture appears in wireless measurements. In this work, we focus specifically on arm- and hand-based gestures, as commonly considered in WiFi gesture sensing~\cite{zhang2021widar3,li2020wihf,feng2025imbalanced}, rather than general full-body activity recognition.

\subsection{Cross-Observation Generalization}
\label{sec:cross_observation}

WiFi sensing is affected by different types of deployment variations. Some variations primarily perturb the statistical appearance of wireless measurements. Examples include hardware-dependent distortions and environment changes. In contrast, changes in user orientation or transceiver placement alter the geometric relationship through which human motion is observed. Such changes do not merely modify the appearance of the signal; they change the physical mapping from the same gesture motion to its wireless measurement. However, most current methods often treat such variations as style factors and attempt to suppress them through domain-invariant representations.

We formalize the latter case through a wireless observation operator. Let \(m\in\mathcal{M}\) denote a gesture motion, where \(\mathcal{M}\) is the space of possible gesture executions, and let \(y(m)\in\mathcal{Y}\) denote its gesture label. We define an observation operator \(g\in\mathcal{G}\) by the physical geometry through which WiFi observes the gesture,
\begin{equation}
g=\left(T,\mathbf{R},p,\theta\right),
\end{equation}
where \(T\) denotes the transmitter position, \(\mathbf{R}=\{R_1,\ldots,R_N\}\) denotes the receiver positions, \(p\) denotes the user position, and \(\theta\) denotes the user orientation. Given \(g\), the wireless sensing process is represented by
\begin{equation}
F_g:\mathcal{M}\rightarrow\mathcal{X}, \qquad x_g=F_g(m),
\end{equation}
where \(F_g\) maps the physical gesture \(m\) to the corresponding wireless observation \(x_g\), such as its CSI-derived temporal and Doppler measurements.

Because \(g\) determines the bistatic projection geometry derived in Sec.~2.1, the same gesture is generally expected to produce different observations under different operators:
\begin{equation}
F_{g_1}(m)\neq F_{g_2}(m).
\end{equation}
This difference is therefore not necessarily a domain artifact to be removed; it can be a physically correct consequence of observing the same motion from different geometries.

\subsection{Local Common Task Observability under WiFi Operators}
\label{sec:task_observability}

Section~\ref{sec:cross_observation} shows that different observation operators physically expose different aspects of the same gesture. This raises a question for invariant learning in WiFi sensing: when can different operators produce a common representation without losing information required for gesture recognition? We answer it through a common task-observability condition.

\textbf{Local WiFi observation model.}
Let $s\in\mathbb{R}^{d}$ denote the local state of a gesture around a particular motion state. Here, $s$ is not restricted to a 2-D velocity vector; it may contain multiple motion components or other local variables that characterize the gestures. For an observation operator $g$, the wireless measurement can be locally approximated by
\begin{equation}
    x_g=A_g s,
    \label{eq:local_observation}
\end{equation}
where $A_g\in\mathbb{R}^{n_g\times d}$ is the local observation Jacobian induced by $g$. For the single moving component discussed in Sec.~\ref{sec:wifi_observation}, this model reduces to the bistatic projection $f_D=q_g^{T}v$. More generally, $A_g$ describes how the complete local gesture state is mapped to the CSI-derived observation under a particular user orientation and transceiver geometry.

We further represent the task-relevant local gesture information by a linear task variable
\begin{equation}
    u=Bs,
    \label{eq:task_coordinate}
\end{equation}
where $B\in\mathbb{R}^{d_u\times d}$ maps the local gesture state $s$ to the variations that must be retained for gesture discrimination. The final gesture classifier may remain nonlinear; $u$ only specifies the task-relevant information that must be preserved by the representation.


Consider now a \emph{strict operator-invariant representation}
\begin{equation}
    z=Cs,
\end{equation}
where the same $C\in\mathbb{R}^{d_z\times d}$ is used to describe the gesture regardless of the observation operator. Such a representation must satisfy two requirements. First, $z$ must be recoverable from every wireless observation, i.e., for each $g$ there exists $D_g$ such that
\begin{equation}
    z=D_gx_g,
    \quad\text{and hence}\quad
    C=D_gA_g.
    \label{eq:recoverable}
\end{equation}
Second, $z$ must remain sufficient for the recognition task, i.e., there exists $L$ such that
\begin{equation}
    u=Wz,
    \qquad
    B=WC.
\label{eq:sufficient}
\end{equation}

These two requirements lead directly to the following condition:
\begin{theorem}[Local Common Task Observability]
\label{thm:common_observability}
Under the local linear WiFi observation model in Eq.~(10), there exists a
linear common representation $z=Cs$ that is
(i) operator-wise recoverable, i.e., for every $g$ there exists $D_g$ such
that $z=D_gx_g$, and
(ii)  task-sufficient, i.e., there exists $W$ such that $u=Wz$,
if and only if
\begin{equation}
    \operatorname{Row}(B)
    \subseteq
    \bigcap_{g\in\mathcal G}\operatorname{Row}(A_g).
\label{eq:common_observability}
\end{equation}
\end{theorem}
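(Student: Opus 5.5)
The plan is to translate both requirements into statements about row spaces and then verify the two directions separately. The basic fact I will use repeatedly is this: for matrices $M\in\mathbb{R}^{k\times d}$ and $N\in\mathbb{R}^{n\times d}$, there exists $D$ with $M=DN$ if and only if $\operatorname{Row}(M)\subseteq\operatorname{Row}(N)$.

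The forward direction of this fact is immediate. Each row of $DN$ is a linear combination of the rows of $N$. The converse holds because, if every row $m_i^{T}$ of $M$ lies in $\operatorname{Row}(N)$, we can write $m_i^{T}=\delta_i^{T}N$ for some coefficient vector $\delta_i$. Stacking the $\delta_i^{T}$ as rows gives the required $D$.

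With this fact, condition (i), $C=D_gA_g$ for every $g$, becomes $\operatorname{Row}(C)\subseteq\operatorname{Row}(A_g)$ for all $g$. That is,
\begin{equation*}
\operatorname{Row}(C)\subseteq\bigcap_{g\in\mathcal G}\operatorname{Row}(A_g).
\end{equation*}
Condition (ii), $B=WC$, becomes $\operatorname{Row}(B)\subseteq\operatorname{Row}(C)$.

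\textbf{Necessity.} Suppose such $C$, $\{D_g\}$ and $W$ exist. Chaining the two inclusions gives
\begin{equation*}
\operatorname{Row}(B)\subseteq\operatorname{Row}(C)\subseteq\bigcap_{g}\operatorname{Row}(A_g),
\end{equation*}
which is exactly Eq.~\eqref{eq:common_observability}. Equivalently, and without row-space language, we can compose the maps: $B=WC=WD_gA_g$ for every $g$. So every row of $B$ is a combination of the rows of each $A_g$.

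\textbf{Sufficiency.} I will pick the simplest candidate, $C=B$ with $W=I_{d_u}$, so that (ii) holds trivially. The assumption gives $\operatorname{Row}(B)\subseteq\operatorname{Row}(A_g)$ for each $g$, and the basic fact then produces a $D_g$ with $B=D_gA_g$. Hence for every state $s$,
\begin{equation*}
D_gx_g=D_gA_gs=Bs=z,
\end{equation*}
which is (i). An alternative is to take $C$ to be a basis matrix of the full intersection subspace $\bigcap_g\operatorname{Row}(A_g)$. This yields the maximal common representation, with $W$ obtained from the inclusion $\operatorname{Row}(B)\subseteq\operatorname{Row}(C)$, but it is not needed for the equivalence.

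\textbf{Expected obstacle.} The only step needing care is the interpretation of "$z=D_gx_g$". It must hold identically for all local states $s\in\mathbb{R}^d$, not for a particular $s$, so that it is equivalent to the matrix identity $C=D_gA_g$ of Eq.~\eqref{eq:recoverable}. Likewise, $u=Wz$ must be equivalent to $B=WC$. I will state explicitly that the representation is required to work for all local states, so these matrix identities follow by evaluating at the standard basis vectors.

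I will also note two further points. The intersection over a possibly infinite $\mathcal G$ is a subspace of the finite-dimensional space $\mathbb{R}^d$, so it causes no difficulty. And the $D_g$ need not be unique, because any element of the left null space of $A_g$ can be added to it.
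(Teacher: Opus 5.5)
Your proof is correct and matches the paper's argument. Necessity goes through the chain $\operatorname{Row}(B)\subseteq\operatorname{Row}(C)\subseteq\operatorname{Row}(A_g)$, read off from $B=WC$ and $C=D_gA_g$, and sufficiency takes $C=B$, $W=I$ and gets each $D_g$ from $\operatorname{Row}(B)\subseteq\operatorname{Row}(A_g)$; your explicit lemma ($M=DN$ is solvable iff $\operatorname{Row}(M)\subseteq\operatorname{Row}(N)$) and the for-all-$s$ remark make precise steps the paper uses tacitly, and your necessity step is cleaner than the paper's text, which garbles it by writing ``choosing $C=B$ and $W=I$'' where only $B=WC$ is needed.
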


\textbf{Scope.}
This is a local statement about exact linear common recoverability under geometry-induced WiFi operators, rather than a general impossibility theorem for arbitrary nonlinear or distribution-level domain generalization.  Allowing $D_g$ to depend on the operator makes the condition permissive: if it fails, even an operator-aware recovery map cannot
recover the task-relevant variations lost by $A_g$ from $x_g$.
Practical alignment methods need not  enforce exact equality, but they cannot create task information lying in $\operatorname{Null}(A_g)$.

\begin{proof}
If a task-sufficient common representation exists, Eq.~\eqref{eq:sufficient} Choosing $C=B$ and $W=I$, and therefore
\[
    \operatorname{Row}(B)
    \subseteq
    \operatorname{Row}(C).
\]
Meanwhile, Eq.~\eqref{eq:recoverable} gives $C=D_gA_g$ for every $g$, yielding
\[
    \operatorname{Row}(C)
    \subseteq
    \operatorname{Row}(A_g),
    \qquad \forall g.
\]
Combining the two relations gives Eq.~\eqref{eq:common_observability}.

Conversely, if Eq.~\eqref{eq:common_observability} holds, then for every $g$ there exists a matrix $D_g$ such that $B=D_gA_g$. Choosing $C=B$ and $L=I$ gives a representation $z=Bs$ that can be recovered from every observation and is itself task-sufficient. Therefore, the condition is also sufficient.
\end{proof}

Theorem~\ref{thm:common_observability} gives a direct interpretation of the assumption behind operator-invariant learning. Define
\[
    \mathcal{O}_g=\operatorname{Row}(A_g)
\]
as the observable gesture subspace of operator $g$, i.e., the local gesture variations that can be recovered from its wireless observation, and
\[
    \mathcal{T}=\operatorname{Row}(B)
\]
as the task-relevant gesture subspace. Importantly, $\mathcal{O}_g$ is not merely a spatial direction. It may contain any locally observable gesture variation, including combinations of motion components, temporal motion coefficients, and scattering-induced changes. The bistatic direction $q_g$ in Sec.~\ref{sec:wifi_observation} is only its simplest special case.

The theorem states that a strict linear common representation is possible only if
\begin{equation}
    \mathcal{T}
    \subseteq
    \mathcal{O}_{g_1}
    \cap\mathcal{O}_{g_2}
    \cap\cdots,
\end{equation}
i.e., \emph{every gesture variation required by the task must be observable under every operator from which the same representation is expected to be recovered}.

An equivalent form of Eq.~\eqref{eq:common_observability} is
\begin{equation}
    \operatorname{Null}(A_g)
    \subseteq
    \operatorname{Null}(B),
    \qquad \forall g.
\label{eq:null_condition}
\end{equation}

If $\Delta s\in\operatorname{Null}(A_g)$, then
\[
    A_g(s+\Delta s)=A_gs.
\]
Exact task sufficiency therefore requires $B\Delta s=0$; otherwise, operator $g$ cannot distinguish two states that differ in task-relevant information. Retaining such a variation violates recoverability from $g$, whereas removing it sacrifices discriminability. Stronger alignment cannot reconstruct information absent from $x_g$.


\textbf{Why strict invariance can lose task information.}
Suppose a gesture variation is useful for recognition and is observable under one operator, but is not observable under another. It then belongs to $\mathcal{T}$ but not to the common observable subspace in Eq.~\eqref{eq:common_observability}. A strictly invariant learner faces an unavoidable conflict: retaining this variation improves task discriminability but makes the representation unavailable from every operator, whereas removing it improves cross-operator consistency at the cost of discarding task-relevant information. Therefore, strict cross-operator invariance can create a fundamental tension between \emph{representation consistency} and \emph{task sufficiency}.

Source invariance does not imply target observability.
Let
\[
    \mathcal{S}=\{g_1,\ldots,g_J\}
\]
denote the source operators and define their common observable subspace as
\begin{equation}
    \mathcal{I}_{\mathcal S}
    =
    \bigcap_{g\in\mathcal S}
    \mathcal{O}_g.
\label{eq:source_intersection}
\end{equation}
An invariant representation learned from the source operators can only rely on information contained in $\mathcal{I}_{\mathcal S}$. However, for the same representation to be recovered from an unseen target operator $g_t$, it must additionally satisfy
\begin{equation}
    \operatorname{Row}(C)
    \subseteq
    \mathcal{O}_{g_t}.
\label{eq:target_condition}
\end{equation}
Nothing in source-domain feature alignment guarantees Eq.~\eqref{eq:target_condition}. When the target orientation or transceiver layout introduces an unseen observation geometry, some task cues used by the source-common representation may no longer be observable in the same way. Hence, source invariance does not imply target observability, explaining why performance under observation-operator extrapolation cannot be guaranteed by domain invariance alone.

\textbf{Adding heterogeneous source operators.}
The same analysis also reveals a counter-intuitive property of invariant learning. For a source set $\mathcal S$, the task-relevant information compatible with
strict common recoverability is:
\begin{equation}
    \mathcal T_{\mathcal S}
    =
    \mathcal T
    \cap
    \mathcal{I}_{\mathcal S},
    \qquad
    \mathcal T_{\mathcal S\cup\{g'\}}
    \subseteq
    \mathcal T_{\mathcal S}.
\end{equation}

Therefore, under strict invariance, the task-relevant information that remains jointly usable across all source operators can only stay unchanged or decrease as additional operators are introduced. If the new operator is geometrically similar to the existing ones, the reduction may be negligible. In contrast, a sufficiently different operator can remove task-relevant variations from the common subspace, increasing the conflict between invariance and discriminability. Adding heterogeneous source domains may therefore cause, rather than necessarily prevent, negative transfer.

\textbf{Simultaneous link aggregation versus cross-operator commonality.}
It is important to distinguish adding a \emph{simultaneous wireless observation} from adding an \emph{invariance constraint across operators}. Suppose one gesture is observed simultaneously by $N$ links under the same observation configuration. Let
\[
    x_{g,n}=A_{g,n}s
\]
be the observation of link $n$. Define the joint observation operator and the corresponding observation as:
\begin{equation}
    A_g^{\mathrm{joint}}
    :=
    \begin{bmatrix}
        A_{g,1}\\
        \vdots\\
        A_{g,N}
    \end{bmatrix},
    \qquad
    x_g=A_g^{\mathrm{joint}}s .
\end{equation}
Its observable subspace satisfies
\begin{equation}
    \operatorname{Row}(A_g^{\mathrm{joint}})
    =
    \sum_{n=1}^{N}\operatorname{Row}(A_{g,n}),
\end{equation}
where the sum denotes the vector-space sum and need not be a direct sum. Hence, adding a complementary link can only preserve or enlarge the jointly observable gesture subspace.


\begin{figure*}[ht]
  \centering
  \includegraphics[width=0.76\linewidth]{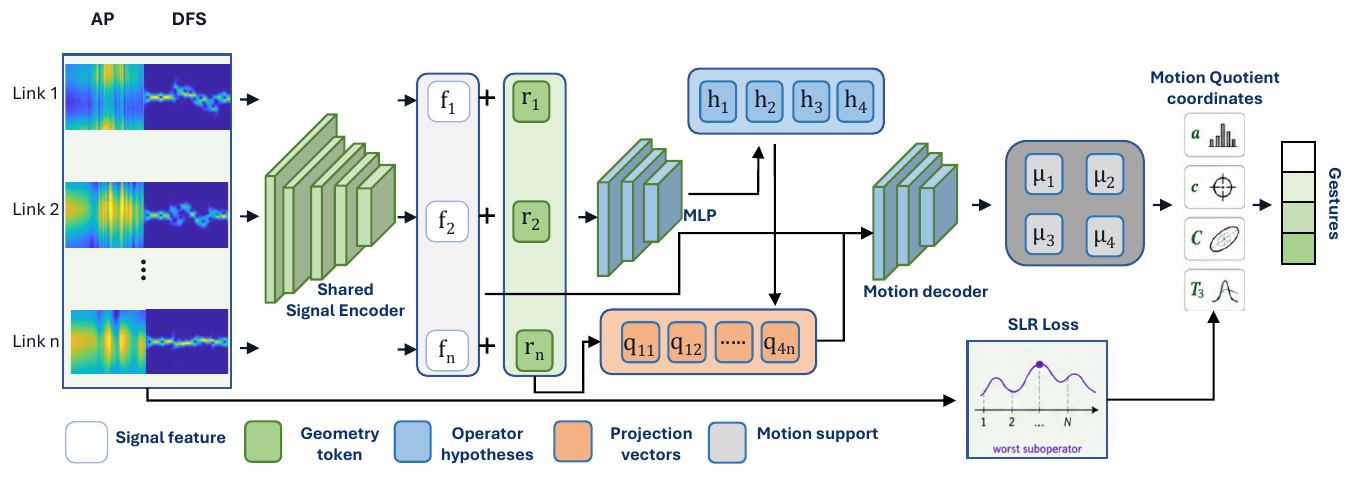}
  \vspace{-0.1in}
  \caption{System Overview.}
  \label{fig:overview}
  \vspace{-0.15in}
\end{figure*}

By contrast, requiring the same representation to be recoverable across heterogeneous operators restricts it to
\begin{equation}
    \bigcap_{g\in\mathcal S}
    \operatorname{Row}(A_g),
\label{eq:operator_intersection}
\end{equation}
which can only preserve or shrink as more operators are included. We refer to this contrast as the simultaneous link aggregation versus cross-operator commonality: \emph{simultaneous observations add observable information, whereas strict invariance across heterogeneous operators intersects it}.

The above analysis yields two key implications for cross-observation WiFi sensing. First, forcing different observation operators to share a numerically identical representation is unnecessary when useful task information is operator-dependent. Second, abandoning invariance altogether would allow a model to memorize arbitrary operator-specific shortcuts. A desirable representation should therefore be \emph{conditioned on how the gesture is observed}, while being constrained to retain only information sufficient for the recognition task. 
Together, these results motivate representations that are conditioned on the observation operator but constrained by task sufficiency rather than numerical equality across operators. This principle leads to MotionQ.

\section{System Overview}
\label{sec:overview}

Figure~\ref{fig:overview} summarizes MotionQ in three stages. First, shared AP/DFS encoders extract per-link evidence, while latent geometry hypotheses convert the known Tx/Rx coordinates into bistatic operators that multiplicatively condition link fusion and two-support motion-measure generation. Second, the same pipeline processes the complete observation and every single-link observation, and a smooth worst-suboperator objective encourages each view to retain information sufficient for the gesture task without matching representations. Third, permutation-invariant central moments provide motion-quotient coordinates, and the conditional class probabilities of all geometry hypotheses are averaged uniformly.




\section{Method}


\subsection{Motion Measure Generation}
\label{sec:motion_measure_generation}

MotionQ first constructs an operator-conditioned two support motion measure that serves as the basis of the motion quotient. Bistatic propagation geometry motivates the operator conditioning, while coordinated limb motion motivates a compact two-support latent representation.

\textbf{Per-link wireless evidence.}
For each wireless link $i$, MotionQ takes its amplitude--phase (AP) representation $A_i$ and Doppler frequency spectrum (DFS) $D_i$ as complementary observations of gesture-induced channel dynamics. AP and DFS are processed by two modality-specific signal encoders whose parameters are shared across all links:
\begin{equation}
    e_i^{A}=E_A(A_i), \qquad
    e_i^{D}=E_D(D_i).
\end{equation}
Their outputs are concatenated and projected into a common per-link feature,
\begin{equation}
    f_i
    =
    P\!\left([e_i^{A},e_i^{D}]\right)
    \in\mathbb{R}^{d_f},
    \label{eq:pair_feature}
\end{equation}
where $d_f=512$ in our implementation. Sharing the same encoders across links prevents the network architecture from depending on a particular receiver identity or link count.


\textbf{Latent observation-operator hypotheses.}
The physical interpretation of $f_i$ depends on the geometric relationship among the transmitter, receiver, and user, while the exact user position and orientation are generally unavailable at deployment. Rather than committing to a single geometric explanation, MotionQ maintains a small set of latent observation-operator hypotheses,
\begin{equation}
    \mathcal{H}
    =
    \{\xi_1,\ldots,\xi_H\},
    \qquad
    \xi_h=(p,o_h),
\end{equation}
where $p\in\mathbb{R}^2$ is a candidate user position shared by all hypotheses and $o_h\in\mathbb{R}^2$ is the unit orientation of hypothesis $h$.

The transceiver coordinates required by MotionQ can be obtained with lightweight deployment-time calibration; for example, PerceptAlign~\cite{jia2026breaking} demonstrates a simple coordinate-registration procedure. Given receiver coordinate $R_i$, each available link is encoded as
\begin{equation}
    \zeta_i
    =
    E_{\mathrm{set}}
    \left(
        [f_i,\psi_{\mathcal V}(R_i)]
    \right),
\end{equation}
where $\mathcal V$ denotes the currently active link set and $\psi_{\mathcal V}(\cdot)$ is its layout-relative coordinate encoding. Its exact construction is provided in Appendix~\ref{app:modetail}. 

The active links are then aggregated by
\begin{equation}
    \bar\zeta
    =
    \frac{1}{|\mathcal V|}
    \sum_{i\in\mathcal V}\zeta_i.
\end{equation}

A shared operator network maps $\bar\zeta$ to the position $p$ and the $H$ orientations $o_h$. The orientations are initialized from evenly spaced anchors, while their residuals and the position remain learnable. The exact
parameterization is provided in Appendix~\ref{app:modetail}. Each $\xi_h$ is a sample-level geometric explanation of the active observation rather than a supervised estimate of the true user pose.

\textbf{Bistatic operator conditioning.}
For each hypothesis $h$ and wireless link $i$, the candidate user
geometry together with the known Tx and Rx coordinates analytically
determines the bistatic observation vector. Following Sec.~\ref{sec:wifi_observation}, we
compute
\begin{equation}
    q_{hi}^{\mathrm{w}}
    =
    -\frac{1}{\lambda}
    \left(
               \frac{p-T}{\lVert p-T\rVert_2}
        +
        \frac{p-R_i}{\lVert p-R_i\rVert_2}
    \right),
\end{equation}
where $T$ and $R_i$ denote the Tx and Rx positions and $\lambda$ is the
carrier wavelength. The vector is then expressed in the candidate body
coordinate system:
\begin{equation}
    q_{hi}
    =
    \operatorname{Rot}(o_h)^\top q_{hi}^{w},
\end{equation}
where $\operatorname{Rot}(o_h)$ is the two-dimensional rotation induced by orientation $o_h$.

MotionQ injects this analytically derived observation operator before
motion-support generation rather than treating geometry as metadata
appended to the classifier. Let
$\tilde q_{hi}=\lambda q_{hi}/2$ denote the dimensionless bistatic coordinate.The link feature is conditioned as
\begin{equation}
    \widetilde f_{hi}
    =
    \sum_{k=1}^{4}
    \phi_k(\widetilde q_{hi})E_k(f_i),
\end{equation}
where $\phi$ is a four-term directional basis and $E_k$ are independent bias-free signal mappings, as detailed in Appendix~\ref{app:modetail}. The conditioned features are fused for each hypothesis:
\begin{equation}
    \widetilde f_h
    =
    \frac{1}{|\mathcal V|}
    \sum_{i\in\mathcal V}\widetilde f_{hi}.
\end{equation}
The basis contains no constant component, and no coordinate-independent residual is provided from $f_i$ to $\widetilde f_{hi}$. Therefore, link evidence reaches motion-measure generation only after being modulated by the analytically derived coordinate.

Human gestures generally contain multiple simultaneous effective motion components whose Doppler responses are superimposed at the receiver \cite{zhang2021widar3,abdulatif2017real}. We represent their instantaneous velocities and relative contributions by the discrete measure:
\begin{equation}
    \nu_t
    =
    \sum_{k=1}^{K_t}
    w_{k,t}\delta_{v_{k,t}},
    \qquad
    w_{k,t}\geq 0,
    \quad
    \sum_{k=1}^{K_t}w_{k,t}=1,
    \label{eq:latent_velocity_measure}
\end{equation}
where $v_{k,t}\in\mathbb{R}^{2}$ is an effective motion component and $w_{k,t}$ is its relative contribution. These components are not assigned predefined body-part identities.

Under hypothesis $h$, link $i$ observes each velocity only through the scalar bistatic projection $\pi_{hi}(v)=q_{hi}^{\top}v$. Applying this projection to every
component of $\nu_t$ yields the one-dimensional Doppler measure
\begin{equation}
    \eta_{hi,t}
    =
    (\pi_{hi})_{\#}\nu_t
    =
    \sum_{k=1}^{K_t}
    w_{k,t}\delta_{q_{hi}^{\top}v_{k,t}}.
    \label{eq:bistatic_measure_pushforward}
\end{equation}
where $(\pi_{hi})_{\#}$ maps the support velocities through $\ell_{hi}$ while preserving their weights. Thus, different links provide different one-dimensional projections of the same underlying velocity measure, motivating MotionQ to infer a compact velocity measure from the geometry-conditioned multi-link evidence.

\textbf{Minimal two-support motion measure.}
The remaining question is how complex the latent velocity measure should be. Human upper-limb motion is highly coordinated rather than an arbitrary high-dimensional process. Biomechanical studies show that a small number of kinematic synergies explain most of the variation in multi-joint reaching, with low-order components capturing dominant motion patterns~\cite{tang2019kinematic}. Motor-control studies likewise describe complex reaching movements as compositions of elementary submovements~\cite{rohrer2004submovements}. These findings motivate a compact, component-based description of gesture motion, but do not prescribe the number of components. MotionQ instead determines the minimum support required to represent the nontrivial motion structure exposed by the wireless observation model.

A single-support measure
\begin{equation}
    \nu_t^{(1)}
    =
    \delta_{\bar v_t}
    \label{eq:one_support_measure}
\end{equation}
contains only an aggregate velocity and has zero central moments
of order two and above,
\begin{equation}
    \Sigma_t=0,
    \qquad
    M_t^{(j)}=0,
    \quad j\geq 3.
\end{equation}
It therefore cannot represent differential motion among simultaneously contributing velocity components. Conversely, adding many unconstrained supports introduces degrees of freedom that finite wireless projections need not identify. MotionQ consequently adopts the smallest non-degenerate extension of Eq.~\eqref{eq:one_support_measure}: a two-support velocity measure.

For each operator hypothesis $h$, a shared decoder maps $\widetilde f_h$ to $N_t$ motion states ($N_t=28$ in our implementation). At time $t$, it predicts an activity $a_{h,t}$, mean velocity $\bar v_{h,t}\in\mathbb{R}^2$, relative split $\Delta v_{h,t}\in\mathbb{R}^2$, and balance $\rho_{h,t}$. They define
\begin{align}
    v_{1,h,t}
    &=
    \bar v_{h,t}
    +(1-\rho_{h,t})\Delta v_{h,t},
    \label{eq:support_one}
    \\
    v_{2,h,t}
    &=
    \bar v_{h,t}
    -\rho_{h,t}\Delta v_{h,t}.
    \label{eq:support_two}
\end{align}
and the normalized moving measure
\begin{equation}
    \bar{\mu}_{h,t}
    =
    \rho_{h,t}\delta_{v_{1,h,t}}
    +
    (1-\rho_{h,t})\delta_{v_{2,h,t}},
    \label{eq:two_support_measure}
\end{equation}
with activity-weighted measure
\begin{equation}
    \mu_{h,t}
    =
    a_{h,t}\bar{\mu}_{h,t}.
    \label{eq:activity_motion_measure}
\end{equation}
The parameterization explicitly separates aggregate and relative motion:
\begin{equation}
    \mathbb{E}_{\bar\mu_{h,t}}[v]
    =
    \bar v_{h,t},
    \qquad
    v_{1,h,t}-v_{2,h,t}
    =
    \Delta v_{h,t}.
    \label{eq:carrier_split_identity}
\end{equation}

We constrain $a_{h,t}\geq 0$ and $\rho_{h,t}\in(0,1)$ using the
corresponding output activations. More importantly, two supports introduce the first non-zero internal moments beyond the mean. Their covariance and third central moment are:
\begin{align}
    \Sigma_{h,t}
    &=
    \rho_{h,t}(1-\rho_{h,t})
    \Delta v_{h,t}\Delta v_{h,t}^{\top},
    \label{eq:two_support_covariance}
    \\
    T_{3,h,t}
    &=
    M^{(3)}_{h,t}
    =
    \rho_{h,t}(1-\rho_{h,t})(1-2\rho_{h,t})
    \Delta v_{h,t}^{\otimes 3}.
    \label{eq:two_support_third_moment}
\end{align}
Here, $\Sigma_{h,t}$ captures a single dominant axis and magnitude of velocity dispersion, while $M^{(3)}_{h,t}$ distinguishes balanced from unequal contributions of the two components. Hence, two supports are the minimum finite structure that augments aggregate motion with directional dispersion and component imbalance, without introducing an unconstrained field. The two supports have no predefined semantic identities or correspondence to particular body parts. 




\subsection{Task-Equivalent Training}
\label{sec:task_equivalent_training}

Operator-conditioned motion representations are allowed to vary with the observation geometry. The remaining challenge is to prevent them from relying on task cues that are useful only under a particular observation operator. Directly aligning representations across observations is undesirable, since it would reintroduce the invariance constraint discussed in Sec.~\ref{sec:task_observability}. MotionQ therefore enforces equivalence only at the task level.

\textbf{Single-link-retention operator intervention.}
Let $\mathcal V$ denote the set of synchronized links in a training sample. For each link $i\in\mathcal V$, we define the single-link view
\begin{equation}
    \mathcal V_i=\{i\},
    \label{eq:single_link_observation}
\end{equation}
which retains only link $i$. MotionQ processes both the complete observation $x_{\mathcal V}$ and every single-link observation $x_{\mathcal V_i}$ using the same model. Each single-link observation is forwarded through the complete pipeline with $\mathcal V_i$ as its active link set, including operator-hypothesis inference, bistatic conditioning, and motion-measure generation.

Crucially, MotionQ imposes no constraint of the form $\lVert Q_{\mathcal{L}}-Q_{\mathcal{S}_i}\rVert$. The quotients inferred from different observations may differ; each is only required to remain sufficient for the same gesture label:
\begin{equation}
    Q_{\mathcal V}
    \neq
    Q_{\mathcal V_i}
    \text{ is allowed},
    \qquad
    \hat y_{\mathcal V}
    =
    \hat y_{\mathcal V_i}
    =
    y.
    \label{eq:single_link_task_equivalence}
\end{equation}
This realizes the principle in Sec.~\ref{sec:task_observability}: MotionQ promotes task sufficiency under individual-link suboperators and provides a local robustness bias for broader operator changes without requiring representation invariance.

For the $i$-th single-link intervention, let
\begin{equation}
     \ell_i
    =
    -\log p\left(y\mid x_{\mathcal V_i}\right).
    \label{eq:single_link_risk}
\end{equation}
be its classification loss. We aggregate the single-link risks using a smooth worst-suboperator objective~\cite{li2021tilted},
\begin{equation}
    \mathcal L_{\mathrm{SLR}}
    =
    \tau
    \left[
        \log
        \sum_{i\in\mathcal V}
        \exp\left(\frac{\ell_i}{\tau}\right)
        -
        \log|\mathcal V|
    \right],
    \label{eq:slr_loss}
\end{equation}
where $\tau=0.2$. As $\tau$ decreases, Eq.~\eqref{eq:slr_loss} approaches the worst single-link intervention while remaining differentiable. We refer to it as the single-link-retention task-equivalence loss.

\subsection{Motion Quotient-Based Classification}
\label{sec:quotient_classification}

The two-support motion measure generated in Sec.~\ref{sec:motion_measure_generation}
admits two equivalent ordered parameterizations because the two supports have
no predefined identities~\cite{monteiller2019alleviating}. We define a \emph{motion quotient} $Q_t$ as the equivalence class obtained by identifying parameterizations that differ only by exchanging the two supports, so that arbitrary support indices cannot acquire unintended semantic meaning during learning.

\textbf{Motion quotient.}
For operator hypothesis $h$ and time step $t$, suppressing the
hypothesis index below for clarity, let
\begin{equation}
    \vartheta_t
    =
    (a_t,\rho_t,v_{1,t},v_{2,t})
\end{equation}
denote an ordered parameterization of the motion measure. Swapping the
two supports gives
\begin{equation}
    \sigma(\vartheta_t)
    =
    (a_t,1-\rho_t,v_{2,t},v_{1,t}),
\end{equation}
which represents exactly the same measure. We therefore identify these
two parameterizations and define
\begin{equation}
    Q_t
    =
    [\vartheta_t]_{S_2},
    \qquad
    \mathcal{Q}_2
    =
    \Theta_2/S_2,
\label{eq:motion_quotient}
\end{equation}
where $\Theta_2$ denotes the ordered two-support parameter space and $S_2$ is the permutation group of two elements.


\textbf{Quotient coordinates and classification.}
The mean velocity $\bar v_t$, covariance $\Sigma_t$, and third central moment $M_t^{(3)}$ derived in Sec.~\ref{sec:motion_measure_generation} are invariant to the support permutation and therefore define coordinates on the quotient. MotionQ represents each time step as:
\begin{equation}
    \Phi(Q_t)
    =
    \left[
        a_t,\,
        a_t\bar v_t,\,
        a_t\operatorname{vec}_s(\Sigma_t),\,
        a_t\operatorname{vec}_s(M_t^{(3)})
    \right]
    \in\mathbb{R}^{10}.
    \label{eq:quotient_coordinates}
\end{equation}
where $\operatorname{vec}_{s}(\cdot)$ retains the independent entries of a symmetric tensor. As shown by the two-support construction in Sec.~\ref{sec:motion_measure_generation}, these moments generically determine the underlying measure up to exactly the support permutation removed by Eq.~\eqref{eq:motion_quotient}. Thus, the classifier discards the arbitrary support identity without introducing an additional unconstrained semantic representation.

For operator hypothesis $h$, the quotient sequence
\begin{equation}
    Q_h
    =
    \{\Phi(Q_{h,t})\}_{t=1}^{N_t}.
\end{equation}
is passed to the shared classifier to obtain the conditional class probability $p(y\mid x,h)$. MotionQ assigns a uniform prior to the latent observation-operator hypotheses and marginalizes their predictions as
\begin{equation}
    p(y\mid x)
    =
    \frac{1}{H}
    \sum_{h=1}^{H}
    p(y\mid x,h).
\end{equation}
Importantly, marginalization is performed over conditional class probabilities rather than by averaging logits or motion measures generated under different geometric hypotheses. Combining the classification objective and the single-link-retention task-equivalence loss gives
\begin{equation}
    \mathcal{L}
    =
    -\log p(y\mid x_{\mathcal L})
    +
    \lambda_{\mathrm{SLR}}\mathcal{L}_{\mathrm{SLR}}.
\end{equation}

\section{Evaluations}

\begin{figure}[ht]
  \centering
  \vspace{-0.1in}
  \includegraphics[width=0.78\linewidth]{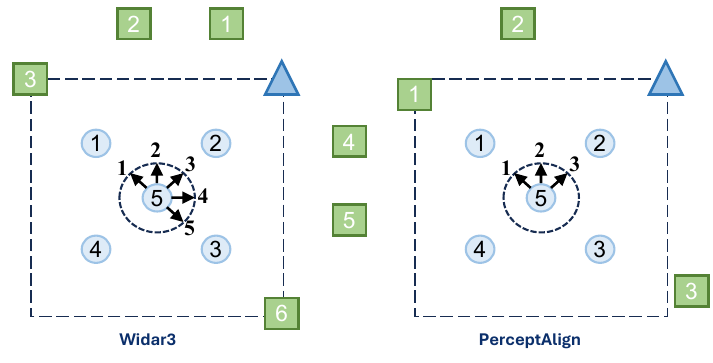}
   \vspace{-0.1in}
  \caption{Example transceiver layouts in Widar3.0 and PerceptAlign, with four layouts used in PerceptAlign.}
  \label{fig:wplayouts}
  \vspace{-0.2in}
\end{figure}

\subsection{Experimental Setup}
\label{sec:experimental_setup}


\begin{table*}[t]
\centering
\caption{Overall recognition accuracy (\%). W4-D, P, and -E denote
cross-direction, position, and environment evaluation,
respectively. W7 $m\!\rightarrow\!n$ denotes training with $m$ links and
deployment with $n$ available links.}
\label{tab:overall_performance}
\small
\setlength{\tabcolsep}{7.0pt}
\renewcommand{\arraystretch}{1.08}
\begin{tabular}{@{}lrrrrrr@{}}
\toprule
\textbf{Protocol}
& \textbf{MotionQ}
& \textbf{WiGRUNT~\cite{gu2022wigrunt}}
& \textbf{UniFi~\cite{liu2024unifi}}
& \textbf{GesFi~\cite{zhang2026beyond}}
& \textbf{CORAL~\cite{sun2016deep}}
& \textbf{DANN~\cite{ganin2016domain}} \\
\midrule

\multicolumn{7}{@{}l}{\textit{Multi-factor cross-observation generalization}} \\
W1
& \textbf{93.699}
& 80.860
& 88.648
& 80.756
& \underline{92.147}
& 91.825 \\
W2
& \textbf{94.590}
& 86.465
& 87.380
& 89.300
& \underline{92.103}
& 89.827 \\
W3
& \textbf{94.080}
& 80.447
& 89.288
& 84.158
& \underline{92.490}
& 92.422 \\
\addlinespace[1pt]

\multicolumn{7}{@{}l}{\textit{Sufficiently covered observations}} \\
W4-D
& \underline{99.130}
& 91.920
& \textbf{99.400}
& 96.450
& 98.177
& 98.176 \\
W4-P
& \textbf{99.478}
& 92.070
& 99.180
& 99.260
& \underline{99.319}
& 99.138 \\
W4-E
& 97.024
& 85.190
& 97.730
& \textbf{99.320}
& \underline{98.380}
& 97.718 \\
\addlinespace[1pt]

\multicolumn{7}{@{}l}{\textit{Source-operator coverage and extrapolation}} \\
W5
& \textbf{74.859}
& 64.770
& 68.892
& 62.094
& 71.161
& \underline{71.635} \\
W6
& \textbf{72.088}
& 62.605
& 63.261
& 66.222
& \underline{66.627}
& 65.257 \\
\cmidrule(lr){1-7}
\textbf{W1--W6 Mean}
& \textbf{90.618}
& 80.541
& 86.722
& 84.695
& \underline{88.800}
& 88.250 \\
\addlinespace[2pt]

\multicolumn{7}{@{}l}{\textit{Unavailable links at deployment}} \\
W7 $3\!\rightarrow\!2$
& \textbf{89.943}
& 32.945
& 77.962
& 34.810
& 84.770
& \underline{86.266} \\
W7 $6\!\rightarrow\!2$
& \textbf{91.794}
& 24.325
& 73.396
& 21.220
& 80.770
& \underline{86.386} \\
W7 $6\!\rightarrow\!3$
& \textbf{93.742}
& 25.021
& 80.161
& 21.754
& 85.726
& \underline{89.991} \\
\addlinespace[1pt]

\multicolumn{7}{@{}l}{\textit{Full-body motion stress test on PerceptAlign}} \\
P1
& \underline{79.794}
& 62.022
& 78.326
& 79.327
& \textbf{80.424}
& 79.381 \\
P2
& \underline{78.676}
& 66.403
& 76.937
& 74.893
& \textbf{79.293}
& 77.566 \\
P3
& \underline{73.721}
& 63.834
& 71.429
& 71.668
& \textbf{74.561}
& 72.570 \\
\midrule
\textbf{Overall Mean}
& \textbf{88.044}
& 65.634
& 82.285
& 70.088
& 85.425
& \underline{85.583} \\
\bottomrule
\end{tabular}
\end{table*}

\textbf{Datasets and protocols.}
We evaluate all methods on the Widar3.0~\cite{zhang2021widar3} and the PerceptAlign dataset~\cite{jia2026breaking} (Figure.~\ref{fig:wplayouts}). We use six gestures in Widar3.0: Push--Pull, Sweep, Clap, Slide, Draw-O, and Draw-Zigzag. We construct seven protocols. W1--W3 evaluate generalization under simultaneous changes in users, environments, layouts, and orientations, using unseen two- or three-link layouts. Here, cross-layout means that the receiver combination is unseen during training. For example, W1 is trained on layouts $[2,5]$ and $[2,6]$, and evaluated on the layouts $[1,3]$ and $[1,4]$. W4 retains all six links to evaluate conventional cross-domain recognition under sufficient observation coverage. W5 and W6 vary the geometric coverage of source orientations to study observation-operator coverage. W7 deploys frozen models trained with three or six links over all two- or three-link subsets, to evaluate robustness to unavailable links at deployment. Because no other public multi-link gesture dataset provides the required transceiver coordinates, we additionally use PerceptAlign as stress test. We use four base activities---Stretch, Lunge, Squat, and Jump, and do not treat direction-specific execution variants as different classes. P1--P3 evaluate generalization under simultaneous changes in scenes, receiver layouts, and execution orientations. Unlike the compact limb gestures in Widar3.0, these activities involve more complex whole-body motion and are used to examine the scope, rather than the primary target of MotionQ. More details are provided in Appendix~\ref{app:protocols}.

\textbf{Baselines.}
We compare MotionQ with three representative WiFi-specific systems and two generic domain generalization methods. \textbf{WiGRUNT}~\cite{gu2022wigrunt} learns domain-robust spatial--temporal cues using dual attention; \textbf{UniFi}~\cite{liu2024unifi} learns consistent multi-view representations through cross-view fusion and contrastive regularization; and \textbf{GesFi}~\cite{zhang2026beyond} automatically mines latent source domains before adversarial alignment. We further include two widely used and competitive generic domain-generalization methods, \textbf{CORAL}~\cite{sun2016deep} and \textbf{DANN}~\cite{ganin2016domain}, representing covariance matching and adversarial domain confusion, respectively. Each WiFi-specific baseline uses the input reported to perform best for its original architecture. CORAL and DANN use exactly the same AP/DFS dual-branch shared encoder as MotionQ, and UniFi also adopts this dual-branch architecture, but uses P and DFS as inputs because its ablation study showed that AP performs worse than P alone. Source-domain labels of CORAL and DANN are constructed from all annotated factors that vary in each protocol, such as environment, layout, and orientation. This favorable, protocol-informed setting requires no target data, but assumes advance knowledge of the types of deployment shifts under evaluation. Such shift-aware partitioning is difficult to guarantee in practical WiFi sensing, where multiple coupled factors may be unknown or poorly characterized by physical labels~\cite{zhang2026beyond}. We therefore use CORAL and DANN as optimistic references under idealized shift knowledge.

\textbf{Implementation.}
All methods follow the same 40-epoch training schedule and use
ResNet-18 backbones~\cite{he2016deep}. The learning rate starts at $10^{-4}$ and is reduced to $10^{-5}$ and $10^{-6}$ after epochs 10 and 25, respectively. MotionQ uses $K=2$ motion supports, $H=4$ operator hypotheses, $\lambda_{\mathrm{SLR}}=0.35$. To avoid target-domain model selection, each setting is repeated with three random seeds and evaluated using the final five fixed reporting epochs rather than selecting the best checkpoint on the target domain. Further implementation details are provided in Appendix~\ref{app:preprocessing}

\subsection{Overall Cross-Observation Performance}
\label{sec:overall_performance}


Table~\ref{tab:overall_performance} summarizes the task-level performance under all protocols. MotionQ achieves the highest average accuracy. On the primary gesture protocols W1--W6, which cover multi-factor observation changes and sufficient-coverage settings, MotionQ averages 90.618\% and outperforms CORAL, DANN, WiGRUNT, UniFi, and GesFi by 1.818, 2.368, 10.077, 3.896, and 5.923 percentage points. The gains on challenging observation shifts do not come at the cost of conventional cross-domain recognition, which current methods typically evaluate under a fully observed six-link setting~\cite{liu2024unifi,gu2022wigrunt,zhang2026beyond,xu2025evaluating,fan2025multi,cao2025real}. Such dense link availability, however, is difficult to guarantee in real-world deployments. Under the fully observed W4 setting, MotionQ reaches 99.130\%, 99.478\%, and 97.024\% on cross-direction, -position, and -environment evaluation, respectively. Its comparatively lower performance under cross-environment transfer is consistent with its design scope: methods that seek domain-common representations explicitly suppress static, environment-specific variations resembling style shifts, whereas MotionQ targets changes in the physical observation operator without explicitly enforcing style invariance.

W7 further examines link unavailability after deployment by exhaustively evaluating all 35 possible two- and three-link configurations. MotionQ achieves the best performance under all three settings, reaching 89.943\%, 91.794\%, and 93.742\% under the three-to-two, six-to-two, and six-to-three reductions, respectively. It exceeds the strongest WiFi-specific baseline by $11.981$, $18.398$, and $13.581$ percentage points, respectively. Notably, these multi-link configurations are not explicitly enumerated during single-retained-link training. These results demonstrate that MotionQ generalizes effectively to unseen link subsets and remains strongly robust even under substantial observation loss.

On the PerceptAlign stress test, CORAL obtains the highest three-task average of 78.093\%, while MotionQ reaches 77.397\%, a difference of only 0.696 percentage points. MotionQ still exceeds DANN, WiGRUNT, UniFi, and GesFi by 0.891, 13.311, 1.833, and 2.101 percentage points, respectively. Considering that these tasks involve coordinated full-body activities rather than the compact limb gestures targeted by MotionQ, the results demonstrate useful transfer while also delineating the scope of the two-support motion prior. Across all fourteen task-level metrics, MotionQ achieves the highest overall average of 88.044\%, exceeding the strongest external baseline, DANN, by 2.461 percentage points.

\subsection{Source Coverage and Extrapolative Orientations}
\label{sec:source_coverage}

\begin{figure}[ht]
  \centering
  \includegraphics[width=1\linewidth]{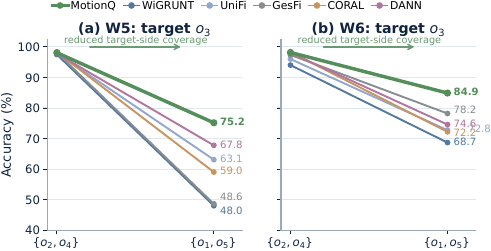}
   \vspace{-0.1in}
  \caption{Impact of target-side operator coverage.}
  \label{fig:target_coverage}
  \vspace{-0.1in}
\end{figure}

We next study how source coverage governs generalization to extrapolative orientations. This question is practically important because the coarse and geometry-sensitive nature of WiFi observations makes it infeasible for collected training data to exhaustively cover the layouts and orientations encountered in real-world deployments. We regard $o_2$--$o_4$ as covered orientations because they lie within the angular range represented by the remaining source orientations, whereas the endpoint orientations $o_1$ and $o_5$ require extrapolations. Across W1--W4, MotionQ exceeds the strongest WiFi-specific baseline by an average of only $1.4$ percentage points on covered orientations, but by $8.5$ percentage points on extrapolative ones. More specifically, across the six endpoint-extrapolation configurations of W1--W3, where environment, user composition, receiver layout, and orientation change simultaneously, MotionQ achieves an average accuracy of $89.2\%$, outperforming WiGRUNT, UniFi, and GesFi by $20.1$, $11.0$, and $16.9$ percentage points, respectively.
The advantage therefore becomes substantially larger when the target operator lies beyond the source coverage, consistent with the common task-observability analysis in Sec.~\ref{sec:task_observability}.

\begin{figure}[ht]
  \centering
  \vspace{-0.12in}
  \includegraphics[width=0.65\linewidth]{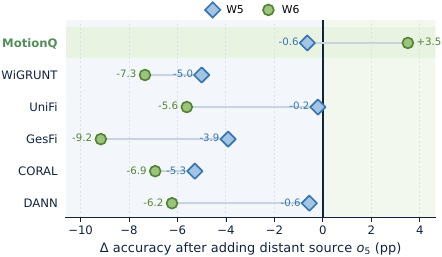}
   \vspace{-0.1in}
  \caption{Impact of adding a geometrically distant source orientation.}
  \label{fig:distant_source}
  \vspace{-0.15in}
\end{figure}



W5 and W6 examine whether additional source orientations improve generalization merely by providing more training data or by better coverage. For target $o_3$, replacing the bracketing pair ${o_2,o_4}$ with the equally sized but distant pair ${o_1,o_5}$ reduces MotionQ by $23.0\%$ and $13.3\%$, whereas the other methods decline by $30.1\%$--$49.6\%$ and $19.2\%$--$26.1\%$ in W5 and W6, respectively (Fig.~\ref{fig:target_coverage}). Fig.~\ref{fig:distant_source} further shows that adding the distant source $o_5$ to $o_2\rightarrow o_1$ decreases all baseline accuracies by $5.6\%$--$9.2\%$ in W6, whereas MotionQ improves by $3.5\%$. When all four remaining source orientations ${o_2,o_3,o_4,o_5}$ are included, MotionQ improves over $o_2\rightarrow o_1$ by $2.2\%$ and $2.8\%$ in W5 and W6, respectively. All baselines remain worse in W6, declining by $4.41\%$--$11.8\%$. In W5, four baselines decline by $0.92\%$--$3.42\%$ (GesFi improving by $5.60\%$). These results show that source diversity helps only when it improves target-side operator coverage. Otherwise, invariant alignment cannot create missing task information and may additionally suppress observable operator-specific cues. MotionQ cannot recover physically absent cues, but it avoids much of this alignment-induced loss by preserving operator-conditioned representations and enforcing equivalence only at the task level.

\begin{figure*}[ht]
  \centering
  \includegraphics[width=0.82\linewidth]{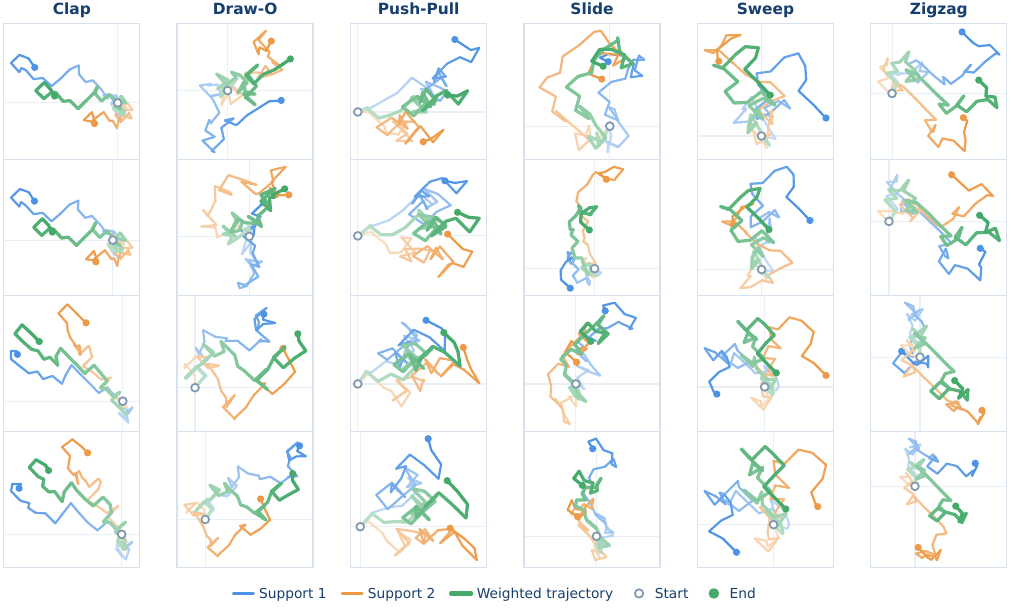}
   \vspace{-0.15in}
  \caption{Emergent gesture-consistent structure in the learned motion
supports.}
  \label{fig:support_trajectories}
  \vspace{-0.1in}
\end{figure*}

\begin{table}[t]
    \centering
    \caption{Ablation studies.}
    \label{tab:ablation}
    \small
    \renewcommand{\arraystretch}{1.05}
    \vspace{-0.15in}
    \textbf{(a) Objective and structural sensitivity}\\[0.5mm]
    \begin{tabular*}{\columnwidth}
        {@{\extracolsep{\fill}}lrrrr@{}}
        \toprule
        $\lambda_{\mathrm{SLR}}$
            & 0 & 0.2 & 0.35 & 0.5 \\
        Acc. (\%)
            & 77.311 & 80.806
            & \textbf{82.100} & 81.050 \\
        \midrule
        $K$
            & 1 & 2 & 3 & -- \\
        Acc. (\%)
            & 80.933 & \textbf{82.100}
            & 81.532 & -- \\
        \midrule
        $H$
            & 2 & 3 & 4 & 6 \\
        Acc. (\%)
            & 81.261 & 81.661
            & \textbf{82.100} & 77.917 \\
        \bottomrule
    \end{tabular*}

    \vspace{1.5mm}

    \textbf{(b) Coordinate intervention}\\[0.5mm]
    \begin{tabular*}{\columnwidth}
        {@{\extracolsep{\fill}}lrrr@{}}
        \toprule
        \textbf{Hyp.}
            & \textbf{Correct (\%)}
            & $\boldsymbol{\Delta}_{\mathrm{swap}}$
            & $\boldsymbol{\Delta}_{\mathrm{collapse}}$ \\
        \midrule
        $h_1$ & 77.300 & +7.056  & +4.483 \\
        $h_2$ & 81.806 & +0.506  & +0.228 \\
        $h_3$ & 83.156 & $-2.556$  & $-1.250$ \\
        $h_4$ & 85.461 & $-26.594$ & $-3.489$ \\
        \midrule
        \textbf{Mixture}
            & 82.611 & $-1.044$ & $-0.472$ \\
        \bottomrule
    \end{tabular*}
    \vspace{-0.2in}
\end{table}

\subsection{Emergent Structure in Motion Supports}
\label{sec:support_interpretability}

Beyond recognition accuracy, we examine whether the learned supports exhibit recurrent gesture-dependent structure. Figure~\ref{fig:support_trajectories} shows four samples per gesture selected with a fixed random seed from W2-$o_3$ and W3-$o_3$ and generated from fixed final-epoch checkpoints. Color progresses from light to dark with time. These diagnostic paths are cumulative sums of latent velocities, not recovered hand trajectories, and the two supports remain exchangeable. Across the two observation settings, several coarse morphologies recur: Push--Pull shows staged directional changes, Draw-O exhibits sustained turning, and Zigzag presents repeated sharp directional turns, while Sweep and Slide exhibit relatively smooth, curved excursions, qualitatively consistent with their arc-like execution around an effective joint center in Widar3.0. Clap remains comparatively directional. Although the exact paths vary across samples, these tendencies recur under different link configurations, consistent with gesture-related latent organization rather than fixed-link appearance. We therefore interpret the figure as qualitative evidence of structured latent motion, not as trajectory reconstruction or validation of specific body-part dynamics. (Appendix~\ref{app:gesture_execution})


\subsection{Ablation Studies}
\label{sec:ablation}

We conduct ablations on W2-$o_1$ using three seeds. The matched ERM backbone obtains an accuracy of $72.233\%$. Incorporating the proposed operator-conditioned support structure raises it to $77.311\%$, while the SLR objective further improves it to $82.100\%$. As shown in Table~\ref{tab:ablation}, the selected weight $\lambda_{\mathrm{SLR}}=0.35$ performs best. Two supports and four operator hypotheses outperform the tested alternatives, indicating that the improvement does not arise merely from increased latent capacity. We further evaluate MotionQ's sensitivity to erroneous coordinates at inference. We consider the paired differences relative to its correct-coordinate result. Swapping the coordinates of Rx2 and Rx5 or collapsing all receiver coordinates to Rx2 reduces the marginalized accuracy by $1.044$ and $0.472$ percentage points, respectively, whereas a uniform $10\,\mathrm{cm}$ coordinate shift changes it by only $+0.106$ percentage points.
Individual hypotheses respond much more strongly: swapping the coordinates improves $h_1$ by $7.056$ but degrades $h_4$ by $26.594$. This contrast arises because the hypotheses represent distinct candidates, so the same coordinate perturbation can move one candidate toward a more informative bistatic projection while moving another away from it. 


\section{Discussion}
\label{sec:limitations}

\textbf{Scope of motion modeling.}
MotionQ is designed for short-duration limb gestures whose discriminative dynamics can be compactly described by a small number of coordinated motion components. Its two-support measure captures barycentric motion, one dominant axis of internal dispersion, and component imbalance, but cannot fully model multiple weakly coupled body parts moving simultaneously along several directions. This boundary is reflected in the PerceptAlign stress test, where MotionQ remains competitive but does not outperform CORAL on average. Therefore, we do not claim that the current two-support model provides a general solution to full-body activity recognition. A natural extension is to develop adaptive or hierarchical motion quotients that introduce additional supports only when required by the observations, enabling applications to complex activities and cross-layout human pose estimation.

\textbf{Physical observability and interpretation.}
MotionQ reduces the task information unnecessarily discarded by universal representation alignment, but it cannot recover motion cues that are physically absent from the target observation operator. Generalization therefore still degrades when the target is poorly covered by the available source operators. Moreover, the learned supports are effective latent velocity components rather than identified body parts or metrically recovered
trajectories. Future work will estimate task observability and prediction uncertainty explicitly, investigate active link or ISAC resource selection to improve target-side coverage, and validate the learned motion structure using independent motion-capture or pose measurements.

\textbf{Deployment assumptions.}
MotionQ assumes known Tx/Rx coordinates and adopts a two-dimensional bistatic geometry. These assumptions are practical in common indoor deployments, where transceivers are typically stationary and their coordinates can be registered once through lightweight calibration~\cite{jia2026breaking}. Extending the current formulation to transceivers that move after calibration and to three-dimensional motion remains future work.



\section{Conclusion}

This work studies cross-observation WiFi gesture recognition, where changes in transceiver geometry alter the physical observation operator rather than merely the appearance of wireless measurements. We establish a common task-observability condition and introduce MotionQ, which learns geometry-conditioned motion quotients while enforcing equivalence only at the task level. Experiments demonstrate strong generalization to unseen layouts and extrapolative orientations, suggesting that preserving operator-conditioned task information is more appropriate than enforcing universal representation invariance.

\bibliographystyle{ACM-Reference-Format}
\bibliography{sample-base}

@String{Computing = "Computing" }

@String{Computer = "{IEEE} Computer" }

@String{Springer = "Springer-Verlag" }

@article{kim2026simplified,
  title={A simplified wearable device powered by a generative EMG network for hand-gesture recognition and gait prediction},
  author={Kim, Kyun Kyu and Zaluska, Tomasz J and Skov, Six and Lee, Yeongjun and Park, Hyunchang and Zhong, Donglai and Khatib, Muhammad and Nishio, Yuya and Jiang, Yuanwen and Delp, Scott L and others},
  journal={Nature Sensors},
  volume={1},
  number={1},
  pages={27--38},
  year={2026},
  publisher={Nature Publishing Group UK London}
}

@ARTICLE{9363693,
  author={IEEE},
  journal={IEEE Std 802.11-2025}, 
  title={IEEE Standard for Information Technology -- Telecommunications and Information Exchange Between Systems Local and Metropolitan Area Networks -- Specific Requirements - Part 11: Wireless LAN Medium Access Control (MAC) and Physical Layer (PHY) Specifications - Amendment 4: Enhancements for Wireless LAN Sensing}, 
  year={2025},
  volume={},
  number={},
  pages={1-4379},
  doi={10.1109/IEEESTD.2025.11184214}}

@article{li2025cross,
  title={Cross-domain gesture recognition via WiFi signals with deep learning},
  author={Li, Baogang and Chen, Jiale and Yu, Xinlong and Yang, Zhi and Zhang, Jingxi},
  journal={Ad Hoc Networks},
  volume={166},
  pages={103654},
  year={2025},
  publisher={Elsevier}
}

@article{yan2025wi,
  title={Wi-sfdagr: Wifi-based cross-domain gesture recognition via source-free domain adaptation},
  author={Yan, Huan and Zhang, Xiang and Huang, Jinyang and Feng, Yuanhao and Li, Meng and Wang, Anzhi and Ou, Weihua and Wang, Hongbing and Liu, Zhi},
  journal={IEEE Internet of Things Journal},
  volume={12},
  number={13},
  pages={24159--24173},
  year={2025},
  publisher={IEEE}
}

@article{li2025wilife,
  title={Wilife: Long-term daily status monitoring and habit mining of the elderly leveraging ubiquitous wi-fi signals},
  author={Li, Shengjie and Liu, Zhaopeng and Lv, Qin and Zou, Yanyan and Zhang, Yue and Zhang, Daqing},
  journal={ACM Transactions on Computing for Healthcare},
  volume={6},
  number={1},
  pages={1--29},
  year={2025},
  publisher={ACM New York, NY}
}

@article{zhang2026beyond,
  title={Beyond physical labels: Redefining domains for robust WiFi-based gesture recognition},
  author={Zhang, Xiang and Yan, Huan and Huang, Jinyang and Liu, Bin and Feng, Yuanhao and Liu, Jianchun and Li, Meng and Zhang, Fusang and Liu, Zhi},
  journal={Proceedings of the ACM on Interactive, Mobile, Wearable and Ubiquitous Technologies},
  volume={10},
  number={1},
  pages={1--27},
  year={2026},
  publisher={ACM New York, NY, USA}
}

@article{miao2025wi,
  title={Wi-Fi sensing techniques for human activity recognition: Brief survey, potential challenges, and research directions},
  author={Miao, Fucheng and Huang, Youxiang and Lu, Zhiyi and Ohtsuki, Tomoaki and Gui, Guan and Sari, Hikmet},
  journal={ACM Computing Surveys},
  volume={57},
  number={5},
  pages={1--30},
  year={2025},
  publisher={ACM New York, NY}
}

@article{yin2022fewsense,
  title={FewSense, towards a scalable and cross-domain Wi-Fi sensing system using few-shot learning},
  author={Yin, Guolin and Zhang, Junqing and Shen, Guanxiong and Chen, Yingying},
  journal={IEEE Transactions on Mobile Computing},
  volume={23},
  number={1},
  pages={453--468},
  year={2022},
  publisher={IEEE}
}

@article{sheng2024metaformer,
  title={MetaFormer: Domain-adaptive WiFi sensing with only one labelled target sample},
  author={Sheng, Biyun and Han, Rui and Xiao, Fu and Guo, Zhengxin and Gui, Linqing},
  journal={Proceedings of the ACM on Interactive, Mobile, Wearable and Ubiquitous Technologies},
  volume={8},
  number={1},
  pages={1--27},
  year={2024},
  publisher={ACM New York, NY, USA}
}

@article{feng2022wi,
  title={Wi-learner: Towards one-shot learning for cross-domain wi-fi based gesture recognition},
  author={Feng, Chao and Wang, Nan and Jiang, Yicheng and Zheng, Xia and Li, Kang and Wang, Zheng and Chen, Xiaojiang},
  journal={Proceedings of the ACM on Interactive, Mobile, Wearable and Ubiquitous Technologies},
  volume={6},
  number={3},
  pages={1--27},
  year={2022},
  publisher={ACM New York, NY, USA}
}

@article{zhang2021widar3,
  title={Widar3. 0: Zero-effort cross-domain gesture recognition with Wi-Fi},
  author={Zhang, Yi and Zheng, Yue and Qian, Kun and Zhang, Guidong and Liu, Yunhao and Wu, Chenshu and Yang, Zheng},
  journal={IEEE Transactions on Pattern Analysis and Machine Intelligence},
  volume={44},
  number={11},
  pages={8671--8688},
  year={2021},
  publisher={IEEE}
}

@article{gao2021towards,
  title={Towards position-independent sensing for gesture recognition with Wi-Fi},
  author={Gao, Ruiyang and Zhang, Mi and Zhang, Jie and Li, Yang and Yi, Enze and Wu, Dan and Wang, Leye and Zhang, Daqing},
  journal={Proceedings of the ACM on Interactive, Mobile, Wearable and Ubiquitous Technologies},
  volume={5},
  number={2},
  pages={1--28},
  year={2021},
  publisher={ACM New York, NY, USA}
}

@article{gu2022wigrunt,
  title={WiGRUNT: WiFi-enabled gesture recognition using dual-attention network},
  author={Gu, Yu and Zhang, Xiang and Wang, Yantong and Wang, Meng and Yan, Huan and Ji, Yusheng and Liu, Zhi and Li, Jianhua and Dong, Mianxiong},
  journal={IEEE transactions on human-machine systems},
  volume={52},
  number={4},
  pages={736--746},
  year={2022},
  publisher={IEEE}
}

@article{liu2023wisr,
  title={WiSR: Wireless domain generalization based on style randomization},
  author={Liu, Shijia and Chen, Zhenghua and Wu, Min and Liu, Chang and Chen, Liangyin},
  journal={IEEE Transactions on Mobile Computing},
  volume={23},
  number={5},
  pages={4520--4532},
  year={2023},
  publisher={IEEE}
}

@article{liu2024unifi,
  title={Unifi: A unified framework for generalizable gesture recognition with wi-fi signals using consistency-guided multi-view networks},
  author={Liu, Yan and Yu, Anlan and Wang, Leye and Guo, Bin and Li, Yang and Yi, Enze and Zhang, Daqing},
  journal={Proceedings of the ACM on Interactive, Mobile, Wearable and Ubiquitous Technologies},
  volume={7},
  number={4},
  pages={1--29},
  year={2024},
  publisher={ACM New York, NY, USA}
}

@article{jia2026breaking,
  title={Breaking Coordinate Overfitting: Geometry-Aware WiFi Sensing for Cross-Layout 3D Pose Estimation},
  author={Jia, Songming and Lu, Yan and Liu, Bin and Zhang, Xiang and Zhao, Peng and Tang, Xinmeng and Wei, Yelin and Huang, Jinyang and Yan, Huan and Liu, Zhi},
  journal={arXiv preprint arXiv:2601.12252},
  year={2026}
}

@article{zhang2025wiopen,
  title={Wiopen: A robust wi-fi-based open-set gesture recognition framework},
  author={Zhang, Xiang and Huang, Jinyang and Yan, Huan and Feng, Yuanhao and Zhao, Peng and Zhuang, Guohang and Liu, Zhi and Liu, Bin},
  journal={IEEE Transactions on Human-Machine Systems},
  volume={55},
  number={2},
  pages={234--245},
  year={2025},
  publisher={IEEE}
}

@article{he2026beamforming,
  title={Beamforming-enabled integrated sensing and communication over commodity multi-user Wi-Fi},
  author={He, Yinghui and Xu, Mingming and Chen, Zhe and Xiao, Fu and Luo, Jun},
  journal={IEEE Transactions on Mobile Computing},
  year={2026},
  publisher={IEEE}
}

@article{han2026rayloc,
  title={RayLoc: Wireless indoor localization via fully differentiable ray-tracing},
  author={Han, Xueqiang and Zheng, Tianyue and Hu, Menglan and Cai, Chao and Han, Tony Xiao and Luo, Jun},
  journal={Proceedings of the ACM on Interactive, Mobile, Wearable and Ubiquitous Technologies},
  volume={10},
  number={2},
  pages={1--24},
  year={2026},
  publisher={ACM New York, NY, USA}
}

@article{chen2026wipihgr,
  title={WiPIHGR: A Physically-Guided Feature Learning Approach for Position-Independent Wi-Fi Gesture Recognition},
  author={Chen, Xingcan and Li, Chenglin and Meng, Wei and Xiao, Wendong},
  journal={IEEE Transactions on Mobile Computing},
  year={2026},
  publisher={IEEE}
}

@article{he2025beam,
  title={Beam-Fi: Integrated sensing and communication via MU-MIMO upon commodity Wi-Fi},
  author={He, Yinghui and Xu, Mingming and Chen, Zhe and Xiao, Fu and Luo, Jun},
  journal={Proceedings of the ACM on Interactive, Mobile, Wearable and Ubiquitous Technologies},
  volume={9},
  number={3},
  pages={1--22},
  year={2025},
  publisher={ACM New York, NY, USA}
}

@article{li2020wihf,
  title={WiHF: Gesture and user recognition with WiFi},
  author={Li, Chenning and Liu, Manni and Cao, Zhichao},
  journal={IEEE Transactions on Mobile Computing},
  volume={21},
  number={2},
  pages={757--768},
  year={2020},
  publisher={IEEE}
}

@inproceedings{fan2026sense,
  title={Sense with polyface mirror: Enhancing Wi-Fi sensing diversity via programmable metasurfaces},
  author={Fan, Long and He, Yinghui and Xie, Lei and Zhang, Serene and Luo, Jun},
  booktitle={Proceedings of the 2026 ACM/IEEE International Conference on Embedded Artificial Intelligence and Sensing Systems},
  pages={214--228},
  year={2026}
}

@inproceedings{li2024uwb,
  title={Uwb-fi: Pushing wi-fi towards ultra-wideband for fine-granularity sensing},
  author={Li, Xin and Wang, Hongbo and Chen, Zhe and Jiang, Zhiping and Luo, Jun},
  booktitle={Proceedings of the 22nd Annual International Conference on Mobile Systems, Applications and Services},
  pages={42--55},
  year={2024}
}

@inproceedings{hu2023muse,
  title={MUSE-Fi: Contactless muti-person sensing exploiting near-field Wi-Fi channel variation},
  author={Hu, Jingzhi and Zheng, Tianyue and Chen, Zhe and Wang, Hongbo and Luo, Jun},
  booktitle={Proceedings of the 29th annual international conference on mobile computing and networking},
  pages={1--15},
  year={2023}
}

@inproceedings{li2025muceiver,
  title={$\mu$Ceiver-Fi: Exploiting Spectrum Resources of Multi-Link Receiver for Fine-Granularity Wi-Fi Sensing},
  author={Li, Xin and He, Yinghui and Luo, Jun},
  booktitle={Proceedings of the 31st Annual International Conference on Mobile Computing and Networking},
  pages={1045--1059},
  year={2025}
}

@inproceedings{hu2025poison,
  title={Poison to Cure: Privacy-preserving Wi-Fi Multi-User Sensing via Data Poisoning},
  author={Hu, Jingzhi and Li, Xin and Gan, Jin and Luo, Jun},
  booktitle={Proceedings of the 31st Annual International Conference on Mobile Computing and Networking},
  pages={47--62},
  year={2025}
}

@article{wang2025vr,
  title={Vr-fi: Positioning and recognizing hand gestures via vr-embedded wi-fi sensing},
  author={Wang, Hongbo and Li, Xin and Li, Jiachun and Zhu, Haojin and Luo, Jun},
  journal={IEEE Transactions on Mobile Computing},
  volume={24},
  number={9},
  pages={8287--8300},
  year={2025},
  publisher={IEEE}
}

@inproceedings{abdulatif2017real,
  title={Real-time capable micro-Doppler signature decomposition of walking human limbs},
  author={Abdulatif, Sherif and Aziz, Fady and Kleiner, Bernhard and Schneider, Urs},
  booktitle={2017 IEEE Radar Conference},
  pages={1093--1098},
  year={2017},
  organization={IEEE}
}

@article{rohrer2004submovements,
  title={Submovements grow larger, fewer, and more blended during stroke recovery},
  author={Rohrer, Brandon and Fasoli, Susan and Krebs, Hermano Igo and Volpe, Bruce and Frontera, Walter R and Stein, Joel and Hogan, Neville},
  journal={Motor control},
  volume={8},
  number={4},
  pages={472--483},
  year={2004},
  publisher={Human Kinetics, Inc.}
}

@misc{tang2019kinematic,
  title={Kinematic synergy of multi-DOF movement in upper limb and its application for rehabilitation exoskeleton motion planning. Front Neurorobot},
  author={Tang, S and Chen, L and Barsotti, M and Hu, L and Li, Y and Wu, X and Bai, L and Frisoli, A and Hou, W},
  year={2019},
  publisher={Neurorobot}
}

@article{li2025cross1,
  title={Cross-Domain Multi-Person Human Activity Recognition via Near-Field Wi-Fi Sensing},
  author={Li, Xin and Hu, Jingzhi and He, Yinghui and Wang, Hongbo and Gan, Jin and Luo, Jun},
  journal={arXiv preprint arXiv:2510.17816},
  year={2025}
}

@inproceedings{sun2016deep,
  title={Deep coral: Correlation alignment for deep domain adaptation},
  author={Sun, Baochen and Saenko, Kate},
  booktitle={European conference on computer vision},
  pages={443--450},
  year={2016},
  organization={Springer}
}

@article{ganin2016domain,
  title={Domain-adversarial training of neural networks},
  author={Ganin, Yaroslav and Ustinova, Evgeniya and Ajakan, Hana and Germain, Pascal and Larochelle, Hugo and Laviolette, Fran{\c{c}}ois and March, Mario and Lempitsky, Victor},
  journal={Journal of machine learning research},
  volume={17},
  number={59},
  pages={1--35},
  year={2016}
}

@article{fan2025multi,
  title={Multi-source domain generalization for csi-based human activity recognition},
  author={Fan, Tianqi and Qiu, Sen and Gong, Wei and Fang, Yuguang},
  journal={IEEE Transactions on Mobile Computing},
  volume={24},
  number={10},
  pages={11034--11045},
  year={2025},
  publisher={IEEE}
}

@article{liu2025efficient,
  title={Efficient one-shot gesture recognition for WiFi ISAC via aug-meta learning},
  author={Liu, Jianwei and Yuan, Jiantao and Yu, Guanding and Han, Jinsong},
  journal={IEEE Journal on Selected Areas in Communications},
  volume={43},
  number={11},
  pages={3766--3781},
  year={2025},
  publisher={IEEE}
}

@article{cao2025real,
  title={Real-time cross-domain gesture and user identification via COTS WiFi},
  author={Cao, Chenhong and Ding, Yue and Dai, Miaoling and Gong, Wei and Zhao, Xibin},
  journal={IEEE Transactions on Mobile Computing},
  volume={24},
  number={6},
  pages={5124--5137},
  year={2025},
  publisher={IEEE}
}

@article{xu2025evaluating,
  title={Evaluating self-supervised learning for WiFi CSI-based human activity recognition},
  author={Xu, Ke and Wang, Jiangtao and Zhu, Hongyuan and Zheng, Dingchang},
  journal={ACM Transactions on Sensor Networks},
  volume={21},
  number={2},
  pages={1--38},
  year={2025},
  publisher={ACM New York, NY}
}

@article{gao2022towards,
  title={Towards robust gesture recognition by characterizing the sensing quality of WiFi signals},
  author={Gao, Ruiyang and Li, Wenwei and Xie, Yaxiong and Yi, Enze and Wang, Leye and Wu, Dan and Zhang, Daqing},
  journal={Proceedings of the ACM on Interactive, Mobile, Wearable and Ubiquitous Technologies},
  volume={6},
  number={1},
  pages={1--26},
  year={2022},
  publisher={ACM New York, NY, USA}
}

@article{chen2024wignn,
  title={WiGNN: WiFi-based cross-domain gesture recognition inspired by dynamic topology structure},
  author={Chen, Yinan and Huang, Xiaoxia},
  journal={IEEE Wireless Communications},
  volume={31},
  number={3},
  pages={249--256},
  year={2024},
  publisher={IEEE}
}

@article{wang2022airfi,
  title={AirFi: Empowering WiFi-based passive human gesture recognition to unseen environment via domain generalization},
  author={Wang, Dazhuo and Yang, Jianfei and Cui, Wei and Xie, Lihua and Sun, Sumei},
  journal={IEEE Transactions on Mobile Computing},
  volume={23},
  number={2},
  pages={1156--1168},
  year={2022},
  publisher={IEEE}
}

@inproceedings{zhang2026wi,
  title={Wi-CBR: Salient-aware adaptive WiFi sensing for cross-domain behavior recognition},
  author={Zhang, Ruobei and Tang, Shengeng and Yan, Huan and Zhang, Xiang and Guo, Jiabao},
  booktitle={Proceedings of the AAAI Conference on Artificial Intelligence},
  volume={40},
  number={2},
  pages={1552--1560},
  year={2026}
}

@inproceedings{johansson2019support,
  title={Support and invertibility in domain-invariant representations},
  author={Johansson, Fredrik D and Sontag, David and Ranganath, Rajesh},
  booktitle={The 22nd International Conference on Artificial Intelligence and Statistics},
  pages={527--536},
  year={2019},
  organization={PMLR}
}

@inproceedings{zhao2019learning,
  title={On learning invariant representations for domain adaptation},
  author={Zhao, Han and Des Combes, Remi Tachet and Zhang, Kun and Gordon, Geoffrey},
  booktitle={International conference on machine learning},
  pages={7523--7532},
  year={2019},
  organization={PMLR}
}

@article{liang2023factorized,
  title={Factorized contrastive learning: Going beyond multi-view redundancy},
  author={Liang, Paul Pu and Deng, Zihao and Ma, Martin Q and Zou, James Y and Morency, Louis-Philippe and Salakhutdinov, Ruslan},
  journal={Advances in Neural Information Processing Systems},
  volume={36},
  pages={32971--32998},
  year={2023}
}

@article{chen2023cross,
  title={Cross-domain WiFi sensing with channel state information: A survey},
  author={Chen, Chen and Zhou, Gang and Lin, Youfang},
  journal={ACM Computing Surveys},
  volume={55},
  number={11},
  pages={1--37},
  year={2023},
  publisher={ACM New York, NY}
}

@article{li2021tilted,
  title={Tilted empirical risk minimization},
  author={Li, Tian and Beirami, Ahmad and Sanjabi, Maziar and Smith, Virginia},
  journal={ICLR 2021},
  year={2021}
}

@article{monteiller2019alleviating,
  title={Alleviating label switching with optimal transport},
  author={Monteiller, Pierre and Claici, Sebastian and Chien, Edward and Mirzazadeh, Farzaneh and Solomon, Justin M and Yurochkin, Mikhail},
  journal={Advances in Neural Information Processing Systems},
  volume={32},
  year={2019}
}

@inproceedings{he2016deep,
  title={Deep residual learning for image recognition},
  author={He, Kaiming and Zhang, Xiangyu and Ren, Shaoqing and Sun, Jian},
  booktitle={Proceedings of the IEEE conference on computer vision and pattern recognition},
  pages={770--778},
  year={2016}
}

@article{zhang2023wital,
  title={Wital: A COTS WiFi devices based vital signs monitoring system using NLOS sensing model},
  author={Zhang, Xiang and Gu, Yu and Yan, Huan and Wang, Yantong and Dong, Mianxiong and Ota, Kaoru and Ren, Fuji and Ji, Yusheng},
  journal={IEEE Transactions on Human-Machine Systems},
  volume={53},
  number={3},
  pages={629--641},
  year={2023},
  publisher={IEEE}
}

@inproceedings{chang2026wirainbow,
  title={WiRainbow: Single-Antenna Direction-Aware Wi-Fi Sensing via Dispersion Effect},
  author={Chang, Zhaoxin and Xiao, Shuguang and Zhang, Fusang and Ma, Xujun and Jouaber, Badii and Zhang, Qingfeng and Zhang, Daqing},
  booktitle={Proceedings of the 2026 ACM/IEEE International Conference on Embedded Artificial Intelligence and Sensing Systems},
  pages={862--875},
  year={2026}
}

@incollection{zhang2023wifi,
  title={Wifi/4g/5g based wireless sensing: Theories, applications and future directions},
  author={Zhang, Daqing and Niu, Kai and Xiong, Jie and Zhang, Fusang and Wang, Xuanzhi},
  booktitle={Integrated Sensing and Communications},
  pages={387--417},
  year={2023},
  publisher={Springer}
}

@inproceedings{deng2009imagenet,
  title={Imagenet: A large-scale hierarchical image database},
  author={Deng, Jia and Dong, Wei and Socher, Richard and Li, Li-Jia and Li, Kai and Fei-Fei, Li},
  booktitle={2009 IEEE conference on computer vision and pattern recognition},
  pages={248--255},
  year={2009},
  organization={Ieee}
}

@article{zeng2019farsense,
  title={FarSense: Pushing the range limit of WiFi-based respiration sensing with CSI ratio of two antennas},
  author={Zeng, Youwei and Wu, Dan and Xiong, Jie and Yi, Enze and Gao, Ruiyang and Zhang, Daqing},
  journal={Proceedings of the ACM on Interactive, Mobile, Wearable and Ubiquitous Technologies},
  volume={3},
  number={3},
  pages={1--26},
  year={2019},
  publisher={ACM New York, NY, USA}
}

@article{yan2026diffloc+,
  title={DiffLoc+: Towards Robust WiFi Hidden Camera Localization Based on Electromagnetic Diffraction},
  author={Yan, Huan and Liu, Jian and Zhang, Xiang and Liu, Zhi and Liu, Bin and Li, Meng and Gong, Zheng and Gao, Ming and Zhang, Fusang},
  journal={IEEE Journal on Selected Areas in Communications},
  year={2026},
  publisher={IEEE}
}

@inproceedings{wei2025source,
  title={Source-Free Domain Adaptation via Perceptual Semantic Decoupling for WiFi Gesture Recognition},
  author={Wei, Yelin and Zhang, Xiang and Liu, Bin and Jia, Songming and Huang, Jinyang and Liu, Zhi and Yan, Huan},
  booktitle={GLOBECOM 2025-2025 IEEE Global Communications Conference},
  pages={4089--4094},
  year={2025},
  organization={IEEE}
}

@article{tan2025wimap,
  title={Wimap: Autonomous wi-fi mapping for device-free tracking in smart homes},
  author={Tan, Renrui and Hong, Tu and Tian, Yichen and Tong, Xinyu and Chen, Sheng and Liu, Xiulong and Xie, Xin and Qu, Wenyu},
  journal={Proceedings of the ACM on Interactive, Mobile, Wearable and Ubiquitous Technologies},
  volume={9},
  number={4},
  year={2025},
  publisher={ACM New York, NY, USA}
}

@article{meng2025metatrack,
  title={Metatrack: Enabling wi-fi device free tracking in complex scenarios},
  author={Meng, Xuanqi and Ge, Weiping and Tian, Yichen and Tong, Xinyu and Liu, Xiulong and Xie, Xin and Qu, Wenyu},
  journal={Proceedings of the ACM on Interactive, Mobile, Wearable and Ubiquitous Technologies},
  volume={9},
  number={4},
  pages={1--26},
  year={2025},
  publisher={ACM New York, NY, USA}
}

@article{zhao2025baton,
  title={Baton: Compensate for Missing Wi-Fi Features for Practical Device-Free Tracking},
  author={Zhao, Yiming and Meng, Xuanqi and Tong, Xinyu and Liu, Xiulong and Xie, Xin and Qu, Wenyu},
  journal={IEEE Transactions on Mobile Computing},
  volume={24},
  number={9},
  pages={9238--9254},
  year={2025},
  publisher={IEEE}
}

@inproceedings{zhang2025camlopa,
  title={Camlopa: A hidden wireless camera localization framework via signal propagation path analysis},
  author={Zhang, Xiang and Zhang, Jie and Ma, Zehua and Huang, Jinyang and Li, Meng and Yan, Huan and Zhao, Peng and Zhang, Zijian and Liu, Bin and Guo, Qing and others},
  booktitle={2025 IEEE symposium on security and privacy (SP)},
  pages={3653--3671},
  year={2025},
  organization={IEEE}
}

@article{feng2025imbalanced,
  title={Imbalanced semi-supervised learning for wifi gesture recognition via dynamic threshold-based spatio-temporal attention networks},
  author={Feng, Qihua and Duan, Chunhui and Xue, Jiawei and Li, Chaozhuo and Huang, Feiran and Zhang, Xi and Weng, Jian and Yu, Philip S},
  journal={IEEE Transactions on Mobile Computing},
  volume={25},
  number={1},
  pages={483--499},
  year={2025},
  publisher={IEEE}
}

\appendix

\begin{table*}[ht]
    \centering
    \caption{\textbf{Detailed Widar3.0 evaluation protocols.}
    ``LOO-Ori.'', ``LOO-Pos.'' and ``LOO-Env.'' denote leave-one-orientation,
    leave-one-position and leave-one-environment evaluation, respectively.}
    \label{tab:widar_protocols}
    \scriptsize
    \setlength{\tabcolsep}{4.1pt}
    \renewcommand{\arraystretch}{1.10}
    \begin{tabular}{@{}p{0.055\textwidth}
                        p{0.205\textwidth}
                        p{0.205\textwidth}
                        p{0.205\textwidth}
                        p{0.255\textwidth}@{}}
        \toprule
        \textbf{ID} &
        \textbf{Source observations} &
        \textbf{Target observations} &
        \textbf{Orientation / position protocol} &
        \textbf{Purpose} \\
        \midrule

        W1 &
        Scene~1, users 1--9;
        layouts $[2,5]$ and $[2,6]$ &
        Scene~2 and Scene~3, eight unseen users
        (four per scene);
        unseen layouts $[1,3]$ and $[1,4]$ &
        Five-fold LOO-Ori.; the remaining four orientations train each fold &
        Simultaneous cross-environment, cross-user, cross-layout, and
        cross-orientation generalization with two-link observations \\
        
        W2 &
        Scene~1, users 1--9;
        layouts $[1,2]$ and $[2,3]$ &
        Scene~2 and Scene~3, eight unseen users
        (four per scene);
        unseen layout $[2,5]$ &
        Five-fold LOO-Ori. &
        Multi-factor cross-observation generalization to an unseen
        two-link observation geometry \\
        
        W3 &
        Scene~1, users 1--9;
        three-link layout $[2,5,6]$ &
        Scene~2 and Scene~3, eight unseen users
        (four per scene);
        complementary three-link layout $[1,3,4]$ &
        Five-fold LOO-Ori. &
        Multi-factor generalization between complementary three-link
        observation geometries \\

        W4 &
        All six links $[1,2,3,4,5,6]$ &
        All six links $[1,2,3,4,5,6]$ &
        W4-D: five-fold LOO-Ori.;
        W4-P: five-fold LOO-Pos;
        W4-E: five-fold LOO-Env. &
        Conventional cross-domain recognition under sufficiently covered
        observations \\

        W5 &
        Fixed layout $[1,5]$ &
        Fixed layout $[1,5]$ &
        Target $o_1$:
        $\{o_2\}$, $\{o_3\}$, $\{o_4\}$, $\{o_5\}$,
        $\{o_2,o_5\}$, or $\{o_2,o_3,o_4,o_5\}$ as sources;
        target $o_3$:
        $\{o_1\}$, $\{o_2\}$, $\{o_4\}$, $\{o_5\}$,
        $\{o_2,o_4\}$, $\{o_1,o_5\}$, or
        $\{o_1,o_2,o_4,o_5\}$ as sources &
        Isolate source-operator coverage and orientation extrapolation
        without introducing layout changes \\

        W6 &
        Source environments/user composition;
        layouts $[1,2]$ and $[2,3]$ &
        Different environments/user composition;
        unseen layout $[2,5]$ &
        Same 13 source--target orientation configurations as W5 &
        Test source coverage and heterogeneous-source effects while
        simultaneously changing environment, user composition, and layout \\

        W7 &
        (i) $[2,5,6]$, or
        (ii) all six links;
        source orientations $o_2$--$o_5$ &
        Target environments/user composition and target orientation $o_1$;
        all real two- or three-link subsets &
        One frozen model per seed; no retraining or target adaptation &
        Link unavailability:
        $3\!\rightarrow\!2$ evaluates all $\binom{6}{2}=15$ subsets;
        $6\!\rightarrow\!2$ evaluates all 15 subsets; and
        $6\!\rightarrow\!3$ evaluates all $\binom{6}{3}=20$ subsets \\
        \bottomrule
    \end{tabular}
\end{table*}

\begin{table*}[ht]
    \centering
    \caption{\textbf{Detailed PerceptAlign stress-test protocols.}}
    \scriptsize
    \setlength{\tabcolsep}{5pt}
    \renewcommand{\arraystretch}{1.10}
    \begin{tabular}{@{}p{0.06\textwidth}
                        p{0.2\textwidth}
                        p{0.25\textwidth}
                        p{0.21\textwidth}
                        p{0.20\textwidth}@{}}
        \toprule
        \textbf{ID} &
        \textbf{Source observations} &
        \textbf{Target observations} &
        \textbf{Orientation protocol} &
        \textbf{Purpose} \\
        \midrule

        P1 &
        Scene~2, users 1--9, logical layout $[1,2,3]$ &
        Scene~3A/B/C, users 1--6, logical layout $[1,2,3]$;
        each target configuration uses its own physical Rx coordinates &
        Three-fold LOO-Ori.:
        $\{o_2,o_3\}\!\rightarrow o_1$,
        $\{o_1,o_3\}\!\rightarrow o_2$, and
        $\{o_1,o_2\}\!\rightarrow o_3$ &
        Cross-scene and cross-geometry full-body motion stress test \\
        
        P2 &
        Scene~2, users 1--9;
        source layouts $[1,3]$ and $[2,3]$ &
        Scene~3A/B/C, users 1--6;
        target layouts $[1,3]$, $[2,3]$, and unseen $[1,2]$ &
        Same three-fold LOO-Ori. as P1 &
        Simultaneous scene, receiver-layout, and orientation changes \\
        
        P3 &
        Scene~2, users 1--9;
        source layouts $[1,3]$ and $[2,3]$ &
        Scene~3A/B/C, users 1--6;
        target layouts $[1,3]$, $[2,3]$, and unseen $[1,2]$ &
        Extreme transfers:
        $o_3\!\rightarrow o_1$ and $o_1\!\rightarrow o_3$ &
        Cross-scene and cross-layout stress test under large
        orientation extrapolation \\
        \bottomrule
    \end{tabular}
\end{table*}

\section{Detailed Evaluation Protocols}
\label{app:protocols}

This appendix provides the complete source--target configurations used in Sec.~\ref{sec:experimental_setup}. Unless otherwise specified, all target observations are strictly excluded from training, model selection, and hyperparameter tuning. Each configuration is repeated with three fixed random seeds (1111, 2026, and 3456). For protocols containing multiple target layouts or scenes, each target is evaluated independently and the reported task-level accuracy is their equal-weight macro-average. The W4-D, W4-P, and W4-E results of WiGRUNT, UniFi, and GesFi are taken from their corresponding papers.

\textbf{Widar3.0 protocols.}
Widar3.0 contains five user orientations, $o_1$--$o_5$, corresponding to $135^\circ$, $90^\circ$, $45^\circ$, $0^\circ$, and $-45^\circ$, respectively. For the multi-factor protocols W1--W3, all source data are collected in Scene~1 from nine users, whereas testing is performed in Scene~2 and Scene~3 using eight different users, with four users in each target scene. The source and target user sets are disjoint. Thus, W1--W3 simultaneously change the collection environment, user population, receiver layout, and target orientation. The transceiver coordinates are shared across the three Widar3.0 scenes; cross-layout therefore refers to changing the subset of receivers participating in the observation rather than relocating the physical receiver coordinates. We denote a receiver layout by $[i,j,\ldots]$, where the indices specify the receivers available to the model.

\textbf{PerceptAlign protocols.}
PerceptAlign is evaluated independently from Widar3.0. We use Scene~2 as the source collection and Scene~3 as the target collection. Scene~2 contains users 1--9, whereas Scene~3 contains users 1--6; the two collections therefore share user identities, and we do not characterize these protocols as strict cross-user generalization. Scene~3 contains three receiver configurations, denoted Scene~3A, Scene~3B, and Scene~3C, each with its own physical Rx coordinates. We remap the three physical receivers in each configuration to logical links $[1,2,3]$ for protocol specification, while MotionQ always uses the corresponding physical coordinates when constructing the bistatic observation operator. 
We retain four base activities---Stretch, Lunge, Squat, and Jump, and do not treat direction-specific execution variants as separate classes. We define recognition at the base-activity level because direction-specific executions of the same activity are related primarily by a global rotation and are therefore task-equivalent under our cross-observation formulation. This label definition is applied consistently to all methods.


\section{Model Details}
\label{app:modetail}

\textbf{Active-layout coordinate encoding.}
For an active link set $\mathcal V$, we compute its receiver centroid, layout center, and layout scale as
\begin{equation}
    \bar R_{\mathcal V}
    =
    \frac{1}{|\mathcal V|}
    \sum_{i\in\mathcal V}R_i,
    \qquad
    c_{\mathcal V}
    =
    \frac{T+\bar R_{\mathcal V}}{2},
\end{equation}
\begin{equation}
    L_{\mathcal V}
    =
    \sqrt{
        \frac{1}{|\mathcal V|}
        \sum_{i\in\mathcal V}
        \lVert R_i-T\rVert_2^2
    }.
\end{equation}
The first frame axis points from the transmitter to the active receiver centroid:
\begin{equation}
    e_{\mathcal V}
    =
    \frac{\bar R_{\mathcal V}-T}
    {\lVert\bar R_{\mathcal V}-T\rVert_2},
    \qquad
    U_{\mathcal V}
    =
    \begin{bmatrix}
        e_{\mathcal V,x} & -e_{\mathcal V,y}\\
        e_{\mathcal V,y} &  e_{\mathcal V,x}
    \end{bmatrix}.
\end{equation}
The layout-relative receiver encoding is
\begin{equation}
    \psi_{\mathcal V}(R_i)
    =
    \frac{
        U_{\mathcal V}^{\top}(R_i-c_{\mathcal V})
    }{
        s_{\mathrm{coord}}L_{\mathcal V}
    },
\end{equation}
where $s_{\mathrm{coord}}$ is a fixed numerical scale.

Let $\alpha_h$ denote the orientation anchor of hypothesis $h$. The operator network predicts a shared angular residual $\Delta_{\mathrm{glob}}$, an anchor-specific residual $\Delta_h$, and a local position $p_{\mathrm{loc}}$. The resulting parameters are
\begin{equation}
    \theta_h
    =
    \alpha_h+\Delta_{\mathrm{glob}}+\Delta_h,
    \qquad
    o_h
    =
    U_{\mathcal V}
    \begin{bmatrix}
        \cos\theta_h\\
        \sin\theta_h
    \end{bmatrix},
\end{equation}
\begin{equation}
    p
    =
    c_{\mathcal V}
    +
    L_{\mathcal V}U_{\mathcal V}p_{\mathrm{loc}}.
\end{equation}
The residuals and $p_{\mathrm{loc}}$ are bounded using $\tanh$. Only the output layers of the corresponding prediction heads are initialized to zero, so the initial orientations coincide with their anchors and the initial position is $c_{\mathcal V}$.

\textbf{Multiplicative bistatic conditioning.}
For the dimensionless bistatic coordinate
$\tilde q_{hi}=\lambda q_{hi}/2$, the directional basis is
\begin{equation}
    \phi(\tilde q_{hi})
    =
    \begin{bmatrix}
        \tilde q_{hi,x}\\
        \tilde q_{hi,y}\\
        \tilde q_{hi,x}^2-\tilde q_{hi,y}^2\\
        2\tilde q_{hi,x}\tilde q_{hi,y}
    \end{bmatrix}.
\end{equation}
Each $E_k$ consists of two bias-free linear layers of dimensions
$512\rightarrow256\rightarrow192$, with a GELU activation after each
layer.

\section{Signal Preprocessing and Backbone Details}
\label{app:preprocessing}

\textbf{CSI-ratio preprocessing.}
For each wireless link, we first form the CSI ratio between two receiving antennas, which suppresses phase distortions shared by the two RF chains and makes the resulting phase more suitable for sensing~\cite{zeng2019farsense}. 
We then extract the amplitude and phase of processed CSI and stack them over all $N_{\mathrm{sc}}$ valid subcarriers, producing the amplitude--phase input
\begin{equation}
    A_i \in \mathbb{R}^{2N_{\mathrm{sc}}\times 224}.
\end{equation}
Following the preprocessing pipeline of GesFi~\cite{zhang2026beyond}, we further process the CSI-ratio sequence to obtain the Doppler frequency spectrum (DFS). We retain 121 Doppler bins and temporally resample every gesture sample to 224 time steps. Thus, the two modalities contain $2N_{\mathrm{sc}}+121$ feature rows in total. For the 30-subcarrier CSI used in our evaluation, the AP and DFS inputs have sizes $60\times224$ and $121\times224$, respectively. 

\textbf{Signal encoders.}
The AP and DFS branches use separate ResNet-18 encoders~\cite{he2016deep}, both initialized with ImageNet-pretrained weights~\cite{deng2009imagenet}. Although the temporal dimension is standardized to 224, we do not resize either sensing representation to the conventional $224\times224$ image resolution. Instead, $A_i$ and $D_i$ are fed to their respective encoders at their native resolutions, preserving the original subcarrier and Doppler-bin organization. The resulting branch features are concatenated and projected into the common per-link representation $f_i$ as described in Eq.~\ref{eq:pair_feature}.

\begin{figure}[ht]
  \centering
  \vspace{-0.12in}
  \includegraphics[width=0.65\linewidth]{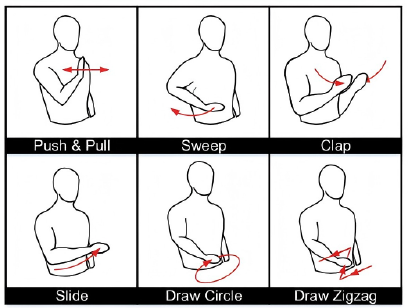}
   \vspace{-0.1in}
  \caption{Qualitative execution sequences of the six Widar3.0 gestures~\cite{zhang2021widar3}.}
  \label{fig:gesture_execution}
  \vspace{-0.15in}
\end{figure}

\section{Relation to Gesture Execution}
\label{app:gesture_execution}

Figure~\ref{fig:gesture_execution} summarizes the qualitative execution sequences of the six Widar3.0 gestures. These gestures are performed around the body and are not confined to the world-coordinate $xy$ plane used by the two-dimensional bistatic model. Moreover, the paths visualized in Fig.~\ref{fig:support_trajectories} are cumulative latent velocities expressed in a hypothesis-conditioned effective motion frame, rather than metrically recovered hand positions. Their absolute orientation, scale, and shape therefore need not coincide with the execution sketches.

The meaningful correspondence is instead temporal and morphological. Push--Pull contains successive motion phases with a reversal; Draw-O contains sustained turning; and Draw-Zigzag contains alternating sharp turns. Sweep and Slide produce relatively smooth excursions, while Clap contains a shorter and more direct convergence phase. These execution characteristics are consistent with the recurrent patterns in Fig.~\ref{fig:support_trajectories}. We emphasize that the two supports remain exchangeable effective motion components and should not be interpreted as particular hands or body parts. Accordingly, this comparison supports the presence of gesture-related latent motion organization, rather than trajectory reconstruction.

\end{document}